\documentclass[11pt]{article}
\pdfoutput=1
\usepackage{latexsym}
\usepackage{amsmath}
\usepackage{amssymb}
\usepackage{amsfonts,amsthm}
\usepackage[top=1in, bottom=1in, left=1in, right=1in]{geometry}
\synctex=1

\usepackage{tikz}
\usetikzlibrary{arrows,decorations.pathmorphing,decorations.shapes,backgrounds,fit}
\pgfkeys{tikz/cc1/.style={dotted} }
\pgfkeys{tikz/cc4/.style={dashed} }
\pgfkeys{tikz/cc3/.style={decorate,decoration=bumps} }
\pgfkeys{tikz/cc2/.style={decorate,decoration=snake} }
\pgfkeys{tikz/solid/.style={line width=2pt} }
\pgfkeys{tikz/solid2/.style={line width=1pt} }
\pgfkeys{tikz/nodostile1/.style={draw,line width=1pt}}
\pgfkeys{tikz/nodostile2/.style={draw,line width=2pt}}
\pgfkeys{tikz/arcostile1/.style={very thick} }
\pgfkeys{tikz/arcostile2/.style={red,very thick} }
\pgfkeys{tikz/arcostile3/.style={thick} }
\pgfkeys{tikz/arcostile4/.style={line width=2pt} }

\newcommand{\ignore}[1]{}

\newtheorem{theorem}{Theorem}[section]
\newtheorem{lemma}[theorem]{Lemma}

\newtheorem{corollary}[theorem]{Corollary}

\usepackage{array}
\newcolumntype{C}[1]{>{\centering\let\newline\\\arraybackslash\hspace{0pt}}m{#1}}

\begin{document}
\title{\bf 2-Edge Connectivity in Directed Graphs}
\author{Loukas Georgiadis$^{1}$ \and Giuseppe F. Italiano$^{2}$ \and Luigi Laura$^{3}$ \and Nikos Parotsidis$^{1}$}
\maketitle
\thispagestyle{empty}

\begin{abstract}
Edge and vertex connectivity are fundamental concepts in graph theory. While they have been thoroughly studied in the case of undirected graphs, surprisingly not much has been investigated for directed graphs.
In this paper we study $2$-edge connectivity problems in directed graphs and, in particular, we consider the computation of the following natural relation: We say that two vertices $v$ and $w$ are $2$-\emph{edge-connected} if there are two edge-disjoint  paths from $v$ to $w$ and two edge-disjoint  paths from $w$ to $v$.
This relation partitions the vertices into blocks such that all vertices in the same block are $2$-edge-connected. Differently from the undirected case, those blocks do not correspond to the $2$-edge-connected components of the graph. We show how to compute this relation in linear time so that we can report in constant time if two vertices are $2$-edge-connected.
We also show how to compute in linear time a sparse certificate for this relation, i.e., a subgraph of the input graph that has $O(n)$ edges and maintains the same $2$-edge-connected blocks as the input graph, where $n$ is the number of vertices.
\end{abstract}

\footnotetext[1]{Department of Computer Science \& Engineering, University of Ioannina, Greece. E-mail: \texttt{\{loukas,nparotsi\}@cs.uoi.gr}.}
\footnotetext[2]{Dipartimento di Ingegneria Civile e Ingegneria Informatica, Universit\`a di Roma ``Tor Vergata'', Roma, Italy. E-mail: \texttt{giuseppe.italiano@uniroma2.it}.
Partially supported by MIUR, the Italian Ministry of Education, University and Research, under Project AMANDA
(Algorithmics for MAssive and Networked DAta).}
\footnotetext[3]{Dipartimento di Ingegneria Informatica, Automatica e Gestionale, ``Sapienza'' Universit\`a di Roma, Roma, Italy. E-mail: \texttt{laura@dis.uniroma1.it}.}

\section{Introduction}
\label{sec:introduction}


Let $G=(V,E)$ be an \emph{undirected} (resp., \emph{directed}) graph, with $m$ edges and $n$ vertices. Throughout the paper, we use interchangeably the term directed graph and digraph.
Edge and vertex connectivity are fundamental concepts in graph theory with numerous practical applications~\cite{digraphs:jensen-gutin,connectivity:nagamochi-ibaraki}. As an example, we mention the computation of disjoint paths in routing and reliable communication, both in  undirected and  directed graphs \cite{DAC:DAC612,ind_st:ir}. 

We assume that the reader is familiar with the standard graph terminology, as contained for instance in~\cite{algorithms:clr}.
An \emph{undirected path} (resp., \emph{directed path}) in $G$ is a  sequence of vertices $v_1$, $v_2$, $\ldots$, $v_k$, such that edge $(v_i,v_{i+1})\in E$ for $i = 1, 2, \ldots , k-1$.
An undirected graph $G$ is \emph{connected}  if there is an undirected path from each vertex to every other vertex.
The \emph{connected components} of an undirected graph are its maximal connected subgraphs.
A directed  graph $G$ is \emph{strongly connected} if there is a directed path from each vertex to every other vertex.
The \emph{strongly connected components} of a directed graph are its maximal connected subgraphs.

Given an undirected graph $G=(V,E)$,
an edge is a \emph{bridge}  if its removal increases the number of connected components of $G$. Graph $G$ is $2$-edge-connected  if it has no  bridges. The $2$-edge-connected components of $G$ are its maximal $2$-edge-connected subgraphs.
Two vertices $v$ and $w$ are $2$-edge-connected if the removal of any edge leaves them in the same connected component:
we denote this relation by  $v \leftrightarrow_{\mathrm{2e}} w$. Anologous definitions can be given for $2$-vertex connectivity.
In particular, a vertex is an  \emph{articulation point} if its removal increases the number of connected components of $G$. A graph $G$ is $2$-vertex-connected if it has no  articulation points. The $2$-vertex-connected components of $G$ are its maximal $2$-vertex-connected subgraphs.
Note that this allows for degenerate $2$-vertex-connected components consisting of one single edge.
Two vertices $v$ and $w$ are  $2$-vertex-connected if the removal of any vertex different from $v$ and $w$ leaves them in the same connected component:
we denote this relation by  $v \leftrightarrow_{\mathrm{2v}} w$.
By Menger's Theorem~\cite{menger}, $v \leftrightarrow_{\mathrm{2e}} w$
(resp., $v \leftrightarrow_{\mathrm{2v}} w$, with $v$ and $w$ being non-adjacent)
if and only if there are two edge-disjoint  (resp., internally vertex-disjoint) paths between $v$ and $w$. It is easy to show that
$v \leftrightarrow_{\mathrm{2e}} w$ (resp., $v \leftrightarrow_{\mathrm{2v}} w$) if and only if $v$ and $w$ are in a same
$2$-edge-connected (resp., non-degenerate $2$-vertex-connected) component. All bridges, articulation points, $2$-edge- and $2$-vertex-connected components of undirected graphs can be computed in linear time essentially by the same algorithm~\cite{dfs:t}.

The notions of 2-edge and 2-vertex connectivity can be naturally extended to directed graphs (digraphs).
Given a digraph $G$, an edge (resp., a vertex) is a \emph{strong bridge} (resp., a \emph{strong articulation point}) if its removal increases the number of strongly connected components of $G$. A digraph $G$ is $2$-edge-connected (resp., $2$-vertex-connected) if it has no  strong bridges (resp., strong articulation points). The $2$-edge-connected (resp., $2$-vertex-connected) components of $G$ are its maximal $2$-edge-connected (resp., $2$-vertex-connected) subgraphs.
Again, this allows for degenerate $2$-vertex-connected components consisting of two mutually adjacent vertices (i.e., two vertices $v$ and $w$ and the two edges $(v,w)$ and $(w,v)$).
Similarly to the undirected case, we say that two vertices $v$ and $w$ are $2$-edge-connected (resp., $2$-vertex-connected), and we denote this relation by  $v \leftrightarrow_{\mathrm{2e}} w$ (resp. $v \leftrightarrow_{\mathrm{2v}} w$), if the removal of any edge (resp., any vertex different from $v$ and $w$) leaves $v$ and $w$ in the same strongly connected component. It is easy to see that $v \leftrightarrow_{\mathrm{2e}} w$ (resp. $v \leftrightarrow_{\mathrm{2v}} w$, with $v$ and $w$ not being mutually adjacent) if and only if there are two edge-disjoint  (resp., vertex-disjoint) directed paths from $v$ to $w$ and two edge-disjoint  (resp., vertex-disjoint) directed paths from $w$ to $v$. (Note that a path from $v$ to $w$ and a path from $w$ to $v$ need not be edge-disjoint or vertex-disjoint).
We define a \emph{$2$-edge-connected block} (resp., \emph{$2$-vertex-connected block}) of a digraph $G=(V,E)$ as a maximal subset $B \subseteq V$ such that $u \leftrightarrow_{\mathrm{2e}} w$ (resp., $u \leftrightarrow_{\mathrm{2v}} w$) for
all $u, v \in B$.
It can be easily seen that, differently from undirected graphs, in digraphs $2$-edge- and $2$-vertex-connected blocks do not correspond to  $2$-edge-connected and $2$-vertex-connected components, as illustrated in Figure~\ref{fig:graph1}.
Furthermore, these notions seem to have a much richer (and more complicated) structure in digraphs. Just to give an example,  we observe that while in the case of undirected connected graphs the $2$-edge-connected components (which correspond to the $2$-edge-connected blocks) are exactly the connected components left after the removal of all bridges, for directed strongly connected graphs the $2$-edge-connected components, the $2$-edge-connected blocks, and the strongly connected components left after the removal of all  strong bridges are not necessarily the same (see Figure~\ref{fig:example1}).

\begin{figure}[t!]
\begin{center}
\includegraphics[width=\textwidth]{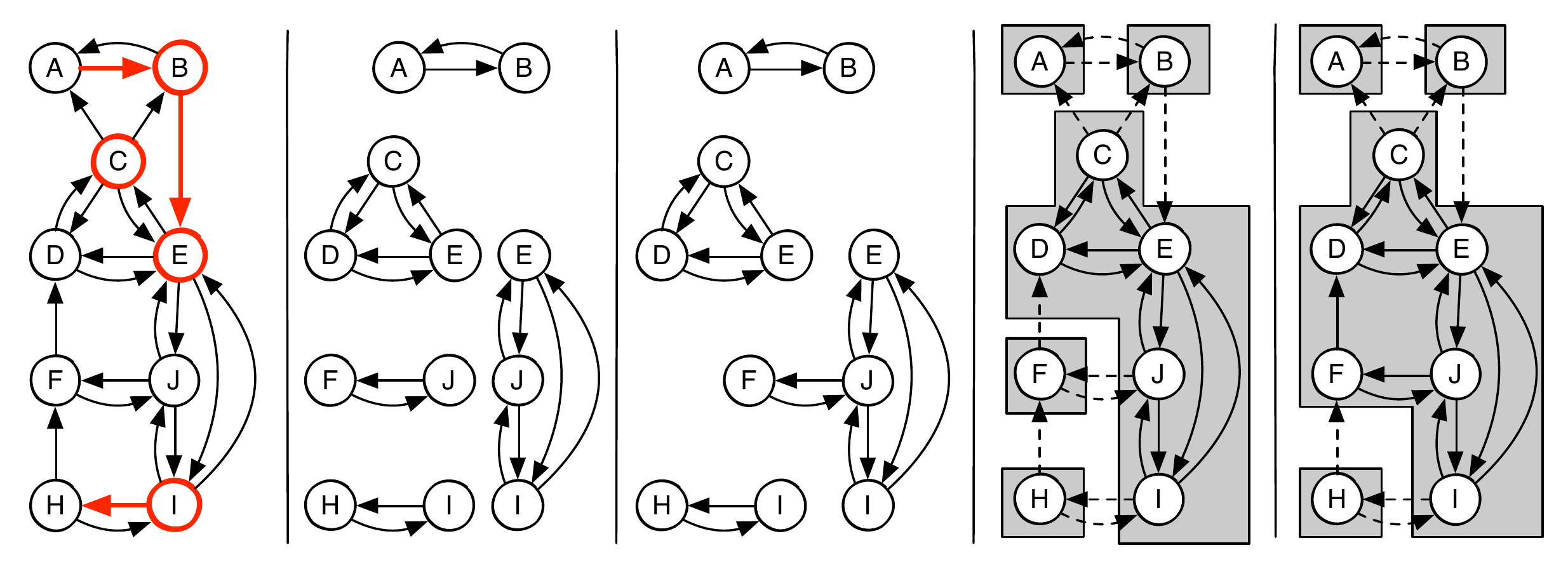}
\begin{tabular}{C{0.0\textwidth}C{0.10\textwidth}C{0.25\textwidth}C{0.15\textwidth}C{0.20\textwidth}C{0.15\textwidth}C{0.20\textwidth}}
\\
& a) $G$ &  b) $2VCC(G)$  & c) $2VCB(G)$ &  d) $2ECC(G)$ &  e) $2ECB(G)$ &
\end{tabular}
\end{center}
\vspace{-0.6cm}
\caption{(a) A strongly connected digraph $G$, with strong articulation points and strong bridges shown in red (better viewed in color). (b) The $2$-vertex-connected components of $G$ (note the three  degenerate $2$-vertex-connected components: $\{A,B\}$, $\{F,J\}$ and $\{H,I\}$). (c) The $2$-vertex-connected blocks of $G$ (note the two degenerate $2$-vertex-connected blocks: $\{A,B\}$ and $\{H,I\}$). (d) The $2$-edge-connected components of $G$.  (e) The $2$-edge-connected blocks of $G$.
}
\label{fig:graph1}
\end{figure}

\begin{figure}[t!]
\begin{center}
\vspace{-.4cm}
\includegraphics[width=\textwidth]{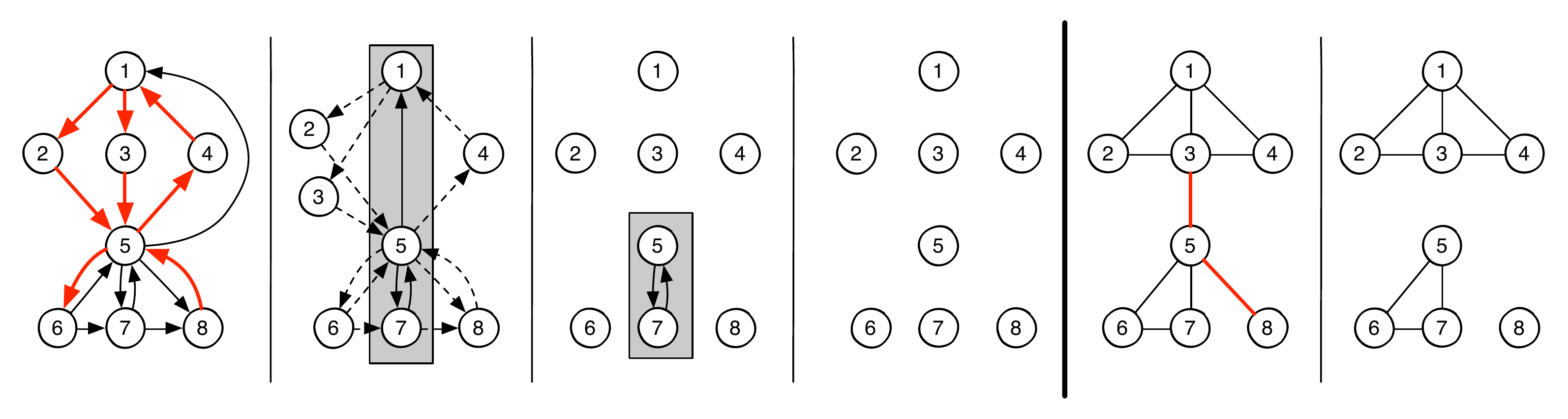}
\begin{tabular}{C{0.14\textwidth}C{0.14\textwidth}C{0.14\textwidth}C{0.14\textwidth}C{0.14\textwidth}C{0.16\textwidth}}
a) $G$ &  b) $2ECB(G)$ & c) $G/SB(G)$ & d) $2ECC(G)$ & e) $U$ & f) $2ECC(U)$
\end{tabular}
\end{center}
\vspace{-0.5cm}
\caption{(a) A digraph $G$ with strong bridges shown in red; (b) The $2$-\emph{edge-connected blocks} of $G$; (c)
The strongly connected components left after removing all the strong bridges from $G$; (d) The $2$-\emph{edge-connected components} of $G$. (e) An undirected graph $U$ with bridges shown in red; (f) The
$2$-\emph{edge-connected components} of $U$, corresponding to the
$2$-\emph{edge-connected blocks} and to
the connected components left after the removal of all bridges of $U$.
\label{fig:example1}}
\vspace{-0.5cm}
\end{figure}

It is thus not surprising that, despite being complete analogs of the corresponding notions  on undirected graphs, $2$-edge- and $2$-vertex connectivity problems appear  to be more difficult on digraphs. In particular, although  all the strong bridges and strong articulation points of a digraph can be found in linear time~\cite{Italiano2012}, computing efficiently, say in linear time, the $2$-edge- and $2$-vertex-connected components, or the $2$-edge- and $2$-vertex-connected blocks in a digraph has been an elusive goal.
A simple algorithm for computing the $2$-edge-connected components can be obtained by repeatedly removing all the strong bridges in the graph (and repeating this process until no strong bridges are left). Since at each round all the strong bridges can be computed in $O(m+n)$ time~\cite{Italiano2012} and there can be at most $O(n)$ rounds, the total time taken by this algorithm is $O(mn)$. As for $2$-vertex connectivity, Erusalimskii and Svetlov~\cite{biblocks:ES1980} proposed an algorithm that reduces the problem of computing the $2$-vertex-connected components of a digraph to the computation of the $2$-vertex-connected components in an undirected graph, but did not analyze the running time of their algorithm.
Jaberi~\cite{2vcc:jaberi14} showed that the algorithm of Erusalimskii and Svetlov has $O(nm^2)$ running time, and proposed two different algorithms with running time $O(mn)$. Both algorithms follow substantially the same high-level approach as the simple algorithm for computing the $2$-edge-connected components of a digraph.
A simple algorithm for computing the $2$-edge- or  $2$-vertex-connected blocks of a digraph takes $O(mn)$ time: given a vertex $v$, one can find in linear time all the vertices that are $2$-edge- or $2$-vertex-connected with $v$ with the help of dominator trees. Since in the worst case this step must be repeated for all vertices $v$, the total time required by the algorithm is $O(mn)$.

From the above discussion it is clear that, differently from the case of undirected graphs, for digraphs there is a huge gap between the $O(m+n)$ time bound for computing all connectivity cuts (strong bridges and strong articulation points), and the $O(mn)$ time bound for computing the connectivity blocks or components ($2$-edge- and $2$-vertex-connected blocks and $2$-edge- and $2$-vertex-connected components).
Thus, it seems quite natural to ask  whether the $O(mn)$ bound is a natural barrier for those problems, or whether they could be solved faster in linear time.

In this paper, we answer this question by presenting the first linear-time algorithm to compute the $2$-edge-connected blocks of a digraph. Our approach hinges on two different algorithms. The first is a simple iterative algorithm that builds the $2$-edge-connected blocks by removing one strong bridge at a time. The second algorithm is  more involved and recursive: the main idea is to consider simultaneously how different strong bridges partition vertices with the help of dominator trees. Although both algorithms run in $O(mn)$ time in the worst case, we show that a careful combination of the iterative and the recursive method is able to achieve the claimed linear-time bound.
Using our algorithm for $2$-edge-connected blocks, we can preprocess in linear time a digraph, and then answer in constant time queries on whether any two vertices are  $2$-edge-connected. We also show how to compute in linear time a sparse certificate for $2$-edge-connected blocks, i.e., a subgraph of the input graph that has $O(n)$ edges and maintains the same $2$-edge-connected blocks as the input graph.
Our techniques can be extended to the computation of the $2$-vertex-connected blocks of a digraph. However, in this case the low-level details become much more complicated.

\section{Flow graphs, dominators, and bridges}
\label{sec:dominators}

In this section we introduce some terminology that will be useful throughout the paper.
A \emph{flow graph} is a digraph such that every vertex is reachable from a distinguished start vertex. Let $G=(V,E)$ be the input digraph, which we assume to be strongly connected. (If not, we simply treat each strongly connected component separately.) For any vertex $s \in V$, we denote by $G(s)=(V,E,s)$ the corresponding flow graph with start vertex $s$; all vertices in $V$ are reachable from $s$ since $G$ is strongly connected. The \emph{dominator relation} in $G(s)$
is defined as follows: A vertex $u$ is a \emph{dominator} of a vertex $w$ ($u$ \emph{dominates} $w$) if every path from $s$ to $w$ contains $u$; $u$ is a \emph{proper dominator} of $w$ if $u$ dominates $w$ and $u \not= w$.  The dominator relation is reflexive and transitive. Its transitive reduction is a rooted tree, the \emph{dominator tree} $D(s)$: $u$ dominates $w$ if and only if $u$ is an ancestor of $w$ in $D(s)$. If $w \not= s$, $d(w)$, the parent of $w$ in $D(s)$, is the \emph{immediate dominator} of $w$: it is the unique proper dominator of $w$ that is dominated by all proper dominators of $w$.
An edge $(u,w)$ is a \emph{bridge} in $G(s)$ if all paths from $s$ to $w$ include $(u,w)$. \footnote{Throughout the paper, to avoid danger of ambiguity we use consistently the term  \emph{bridge} to refer to a bridge of a flow graph $G(s)$ and the term \emph{strong bridge} to refer to a strong bridge in the original graph $G$.}

Lengauer and Tarjan~\cite{domin:lt} presented an algorithm for computing dominators in  $O(m \alpha(n, m/n))$ time for a flow graph with $n$ vertices and $m$ edges, where $\alpha$ is a functional inverse of Ackermann's function~\cite{dsu:tarjan}.
Subsequently, several linear-time algorithms
were discovered~\cite{domin:ahlt,dominators:bgkrtw,domin:bkrw,dominators:Fraczak2013,dominators:poset,dom:gt04}.
Tarjan~\cite{T74TR} showed that
the bridges of flow graph $G(s)$ can be computed in $O(m)$ time given $D(s)$. He also also presented an $O(m \alpha(n, m/n))$-time algorithm to compute bridges that uses static tree set union to contract strongly connected subraphs in $G$~\cite{st:t}.
The Gabow-Tarjan static tree disjoint set union algorithm \cite{dsu:gt} reduces the running
time of this algorithm to $O(m)$ on a RAM. Buchsbaum et al.~\cite{dominators:bgkrtw} gave an $O(m)$-time pointer-machine algorithm.

Italiano et al. \cite{Italiano2012} showed that the strong articulation points of $G$ can be computed from the dominator trees of $G(s)$ and $G^R(s)$, where $s$ is an arbitrary start vertex and $G^R$ is the the digraph that results from $G$ after reversing edge directions; similarly, the strong bridges of $G$ correspond to the bridges of $G(s)$ and $G^R(s)$. This gives the following bound on the number of strong bridges.

\begin{lemma} \emph{(\cite{Italiano2012})}
\label{lemma:strong-bridges-number}
Any digraph with $n$ vertices has at most $2n-2$ strong bridges.
\end{lemma}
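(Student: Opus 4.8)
The plan is to reduce the claim to a bound on the number of \emph{bridges} of a flow graph. Fix an arbitrary start vertex $s$; since $G$ is strongly connected, every vertex is reachable from $s$ both in $G$ and in $G^R$, so $G(s)$ and $G^R(s)$ are well-defined flow graphs on $n$ vertices. By the result of Italiano et al.\ quoted just above, every strong bridge of $G$ is either a bridge of the flow graph $G(s)$ or an edge whose reversal is a bridge of the flow graph $G^R(s)$. Hence it suffices to prove that a flow graph on $n$ vertices has at most $n-1$ bridges: adding the bounds for $G(s)$ and $G^R(s)$ yields at most $2(n-1)=2n-2$ strong bridges of $G$.

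To bound the bridges of a flow graph $G(s)$, I would show that every bridge is an edge of the dominator tree $D(s)$; the bound $n-1$ then follows since $D(s)$ has exactly $n-1$ edges. So let $(u,w)$ be a bridge, i.e.\ every path from $s$ to $w$ contains the edge $(u,w)$. First, $u$ dominates $w$: any path from $s$ to $w$ avoiding $u$ would in particular avoid the edge $(u,w)$, contradicting the bridge property. Moreover $u$ is the \emph{immediate} dominator $d(w)$: if $x$ is any proper dominator of $w$, pick an arbitrary path from $s$ to $u$ (one exists, since $u$ is reachable from $s$) and append the edge $(u,w)$; the resulting $s$-to-$w$ path must contain $x$, and since $x\neq w$, the vertex $x$ already appears on the chosen $s$-to-$u$ path. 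As that path was arbitrary, $x$ dominates $u$. Thus $u$ is a proper dominator of $w$ dominated by every proper dominator of $w$, so $u=d(w)$ and $(u,w)=(d(w),w)$ is a tree edge of $D(s)$. (Note $w\neq s$, since the trivial one-vertex path witnesses that no edge into $s$ can be a bridge, so $d(w)$ is indeed defined.)

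Putting the pieces together: $G(s)$ has at most $n-1$ bridges and so does $G^R(s)$, and every strong bridge of $G$ occurs among these, for a total of at most $2n-2$. The only nontrivial step is the middle one — that a bridge $(u,w)$ must coincide with the dominator-tree edge $(d(w),w)$ into its head — but this is precisely the fact underlying Tarjan's computation of bridges from $D(s)$, and the short argument above suffices; the reduction through $G(s)$ and $G^R(s)$ and the counting of tree edges are immediate.
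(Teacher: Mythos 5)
Your proposal is correct and follows essentially the same route the paper relies on: the lemma is quoted from Italiano et al., and the derivation sketched just before it in the paper is exactly your argument — every strong bridge of $G$ is a bridge of $G(s)$ or (reversed) of $G^R(s)$, and each flow graph has at most $n-1$ bridges because every bridge $(u,w)$ coincides with the dominator-tree edge $(d(w),w)$. The only cosmetic point is that the lemma is stated for an arbitrary digraph while you assume strong connectivity; this is harmless, since a strong bridge must lie inside a single strongly connected component, so summing the bound $2n_i-2$ over components of sizes $n_i$ still gives at most $2n-2$.
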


Experimental studies for algorithms that compute dominators, strong bridges, and strong articulation points are presented in \cite{strong-articulation:sea,Loops:SEA14}. The experimental results show that the corresponding fast algorithms given in \cite{dominators:Fraczak2013,Italiano2012,domin:lt,st:t} perform very well in practice even on very large graphs. 
\section{Computing the $2$-edge-connected blocks}
\label{section:2-edge-connected-blocks}

We recall that $u \leftrightarrow_{\mathrm{2e}} w$  denotes that vertices $u$ and $w$ are $2$-edge-connected, and that a \emph{$2$-edge-connected block} of a digraph $G=(V,E)$ is a maximal subset $B \subseteq V$ such that $u \leftrightarrow_{\mathrm{2e}} w$ for all $u, w \in B$.

\begin{theorem}
\label{theorem:2ECB}
The $2$-edge-connected blocks of a digraph $G=(V,E)$ form a partition of $V$.
\end{theorem}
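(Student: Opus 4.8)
The plan is to establish that $\leftrightarrow_{\mathrm{2e}}$ is an equivalence relation on $V$; the $2$-edge-connected blocks are then exactly its equivalence classes, which by definition partition $V$. Reflexivity ($v \leftrightarrow_{\mathrm{2e}} v$ for every $v$) and symmetry are immediate from the definition, so the only real content is transitivity, and I would prove it using the cut-based formulation of the relation rather than the Menger (edge-disjoint paths) one.

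Recall that $u \leftrightarrow_{\mathrm{2e}} w$ holds precisely when, for every edge $e \in E$, the vertices $u$ and $w$ lie in the same strongly connected component of $G - e$. The key observation is that, for any \emph{fixed} digraph $H$, the relation ``$x$ and $y$ lie in the same strongly connected component of $H$'' is itself an equivalence relation, since concatenation of directed paths gives transitivity. Hence, if $u \leftrightarrow_{\mathrm{2e}} v$ and $v \leftrightarrow_{\mathrm{2e}} w$, then for each edge $e$ the pairs $u,v$ and $v,w$ lie in a common strongly connected component of $G - e$, and therefore so do $u,w$; since this holds for all $e$, we conclude $u \leftrightarrow_{\mathrm{2e}} w$. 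I would point out explicitly that this argument genuinely needs the cut formulation: it is not at all obvious how to combine two edge-disjoint $u \to v$ paths with two edge-disjoint $v \to w$ paths into two edge-disjoint $u \to w$ paths, so transitivity is not transparent from the Menger characterization alone. (The equivalence of the two formulations is Menger's theorem, already invoked in the excerpt.)

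It remains to translate ``equivalence relation'' into the partition statement, matching the definition of block. Given an equivalence relation, each vertex $v$ lies in its class $[v]$, which is a set on which the defining pairwise condition holds, and it is maximal with this property (any vertex outside $[v]$ fails to be $2$-edge-connected to $v$, hence cannot be added); so $[v]$ is the $2$-edge-connected block containing $v$, and conversely any block $B$ equals $[x]$ for any $x \in B$. Thus the blocks are exactly the classes $[v]$, $v \in V$, which cover $V$ and are pairwise disjoint.

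I expect essentially no obstacle here once one settles on the cut-based definition: transitivity of $\leftrightarrow_{\mathrm{2e}}$ reduces in one line to transitivity of the ``same strongly connected component'' relation, and everything else is bookkeeping about maximal cliques of an equivalence relation. The only point requiring minor care is keeping the two equivalent descriptions of $\leftrightarrow_{\mathrm{2e}}$ straight and using the convenient one.
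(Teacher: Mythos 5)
Your proof is correct, and it reaches the conclusion by a different (and in fact more elementary) route than the paper. The paper also reduces the theorem to transitivity of $\leftrightarrow_{\mathrm{2e}}$, but proves transitivity through the path/cut interface: for an arbitrary $u$-$w$ cut $(U,W)$ it observes that $v$ lies on one of the two sides, so either $u \leftrightarrow_{\mathrm{2e}} v$ or $v \leftrightarrow_{\mathrm{2e}} w$ forces at least two edges directed from $U$ to $W$, and then invokes Menger's theorem to convert ``every cut has at least two crossing edges'' into two edge-disjoint $u$-to-$w$ paths. You instead stay entirely with the edge-removal formulation (which is the paper's primary definition in the introduction): for each fixed edge $e$, ``lying in the same strongly connected component of $G \setminus e$'' is an equivalence relation, and $\leftrightarrow_{\mathrm{2e}}$ is the intersection of these relations over all $e$, hence itself an equivalence relation. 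This avoids Menger altogether and makes transitivity a one-line observation; the paper's argument, by contrast, is what you need if you insist on the two-edge-disjoint-paths characterization as the working definition, and it makes explicit the cut-counting fact that your remark correctly flags as the reason path-splicing is not straightforward. Your translation from ``equivalence relation'' to ``the maximal pairwise-related sets are exactly the equivalence classes, which partition $V$'' is the same routine step the paper leaves implicit, and both arguments generalize verbatim to $k$-edge-connected blocks as noted after the theorem.
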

\begin{proof}
We show that $\leftrightarrow_{\mathrm{2e}}$ is an equivalence relation. The relation is by definition reflexive and symmetric, so it remains to show that it is also transitive when $G$ has at least three vertices. Let $u$, $v$, and $w$ be three distinct vertices such that $u \leftrightarrow_{\mathrm{2e}} v$ and $v \leftrightarrow_{\mathrm{2e}} w$. Consider any $u$-$w$ cut $(U,W)$, where $u \in U$ and $w \in W$. Let $k'$ be the number of edges directed from $U$ to $W$.
We will show that $k' \ge 2$.  If $v \in U$, then $v \leftrightarrow_{\mathrm{2e}} w$ implies that $k' \ge 2$. Otherwise, $v \in W$, and $u \leftrightarrow_{\mathrm{2e}} v$ implies that $k' \ge 2$. The fact that $u \leftrightarrow_{\mathrm{2e}} w$ follows from Menger's Theorem \cite{menger}.
\end{proof}

Throughout, we use the notation $[v]_\mathrm{2e}$ to denote the $2$-edge-connected block containing vertex $v\in V$.
We can generalize the $2$-edge-connected relation for $k \ge 2$ edge-disjoint paths: the proof of Theorem \ref{theorem:2ECB} can be extended to show that this relation also defines a partition of $V$ into $k$-edge-connected blocks.
By Theorem \ref{theorem:2ECB}, once the $2$-edge-connected blocks are available, it is easy to test in constant time if
two vertices are $2$-edge-connected.

Next we develop algorithms that compute the $2$-edge-connected blocks of a digraph $G$. Clearly, we can assume that $G$ is strongly connected, so $m \ge n$. If not, then we process each strongly connected component separately; if $u \leftrightarrow_{\mathrm{2e}} v$ then $u$ and $v$ are in the same strongly connected component $S$ of $G$, and moreover, any vertex on a path from $u$ to $v$ or from $v$ to $u$ also belongs in $S$.
We begin with a simple algorithm that removes a single strong bridge at a time. In order to get a more efficient solution, we need to consider simultaneously how different strong bridges partition the vertex set. We present a recursive algorithm that does this with the help of dominator trees. Although both these algorithms run in $O(mn)$ time in the worst case, we finally show that a careful combination of them achieves linear time.

\subsection{A simple algorithm}
\label{section:2ECB-simple}

Algorithm \textsf{Simple2ECB} is an immediate application of the characterization of the $2$-edge-connected blocks in terms of strong bridges.
Let $u$ and $v$ be two distinct vertices. We say that a strong bridge $e$ \emph{separates $u$ from $v$} if all paths from $u$ to $v$ contain $e$. In this case $u$ and $v$ belong to different strongly connected components of $G\setminus e$.
This observation implies that we can compute the $2$-edge-connected blocks by computing the strongly connected components of $G\setminus e$ for every strong bridge $e$.

\begin{center}
\fbox{
\begin{minipage}[h]{\textwidth}
\begin{center}
\textbf{Algorithm \textsf{Simple2ECB}: Computation of the $2$-edge-connected blocks of a strongly connected digraph $G=(V,E)$}
\end{center}
\begin{description}\setlength{\leftmargin}{10pt}
\item[Step 1:] Compute the strong bridges of $G$.
\item[Step 2:] Initialize the current $2$-edge-connected blocks as $[v]_{\mathrm{2e}} = V$. (Start from the trivial partition containing only one block.)
\item[Step 3:] For each strong bridge $e$ do:
  \begin{description}\setlength{\leftmargin}{10pt}
    \item[Step 3.1:] Compute the strongly connected components $S_1,\ldots,S_k$ of $G\setminus e$.
    \item[Step 3.2:] Let $\{[v_1]_{\mathrm{2e}},\ldots,[v_l]_{\mathrm{2e}}\}$ be the current $2$-edge-connected blocks. Refine the partition into blocks by computing the intersections $[v_i]_{\mathrm{2e}} \cap S_j$ for all $i=1,\ldots,l$ and $j=1,\ldots,k$.
  \end{description}
\end{description}
\end{minipage}
}
\end{center}

\begin{lemma}
\label{lemma:2-edge-connected-partition-strong-bridges}
Algorithm \textsf{Simple2ECB} runs in $O(m b^{\ast})$ time, where $b^{\ast}$ is the number of strong bridges of $G$.
\end{lemma}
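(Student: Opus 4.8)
The plan is to bound the running time of each of the three steps of Algorithm \textsf{Simple2ECB} separately and then sum. First I would observe that, since $G$ is strongly connected, $m \ge n$, so any single pass over the graph costs $O(m)$ time. Step~1 computes all strong bridges of $G$; by the results of Italiano et al.~\cite{Italiano2012} recalled in Section~\ref{sec:dominators}, this takes $O(m+n) = O(m)$ time (it amounts to computing the dominator trees of $G(s)$ and $G^R(s)$ and extracting their bridges). Step~2 simply initializes a single block containing all of $V$, which costs $O(n) = O(m)$ time.

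The main work is in Step~3, which iterates over the strong bridges. By Lemma~\ref{lemma:strong-bridges-number} there are $b^{\ast} \le 2n-2$ of them, so the loop body executes $b^{\ast}$ times; I would show each iteration costs $O(m)$, giving $O(m b^{\ast})$ overall. For a fixed strong bridge $e$: Step~3.1 computes the strongly connected components of $G\setminus e$, which is a single strongly-connected-components computation (e.g.\ Tarjan's algorithm) on a graph with $n$ vertices and at most $m$ edges, hence $O(m+n)=O(m)$ time. Step~3.2 refines the current partition $\{[v_1]_{\mathrm{2e}},\ldots,[v_l]_{\mathrm{2e}}\}$ by intersecting with the SCCs $S_1,\ldots,S_k$ of $G\setminus e$. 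The key point here is that this refinement can be done in $O(n)$ time without materializing all the products $[v_i]_{\mathrm{2e}} \cap S_j$ explicitly: label each vertex $v$ by the pair (identifier of its current block, identifier of its SCC in $G\setminus e$), then a single radix sort (or bucket sort, since both coordinates are integers in $[1,n]$) groups vertices with equal labels, and the resulting groups are precisely the nonempty intersections. This costs $O(n)=O(m)$ time per strong bridge. Summing Steps~3.1 and~3.2 over the $b^{\ast}$ strong bridges yields $O(m b^{\ast})$, and adding the $O(m)$ cost of Steps~1--2 (which is absorbed since $b^{\ast}\ge 1$ whenever $G$ has a strong bridge; if $G$ has none then $G$ is $2$-edge-connected and the algorithm trivially runs in $O(m)$ time) gives the claimed bound.

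The only step that requires care—and the one I would regard as the main obstacle to a naive $O(m b^{\ast})$ claim—is the partition-refinement in Step~3.2: a careless implementation that loops over all pairs $(i,j)$ could cost $\Theta(n^2)$ per strong bridge. The fix is exactly the sort-by-label trick described above, which exploits that at all times the partition has at most $n$ blocks and $G\setminus e$ has at most $n$ SCCs, so the composite labels live in a universe of size $n^2$ and can be sorted in linear time. With that observation in place, correctness of the refinement follows immediately from the discussion preceding the algorithm (a strong bridge $e$ separates $u$ from $v$ iff they lie in different SCCs of $G\setminus e$, and $u \leftrightarrow_{\mathrm{2e}} v$ iff no strong bridge separates them in either direction), and the time bound is as stated.
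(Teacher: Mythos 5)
Your proposal is correct and follows essentially the same route as the paper's proof: compute the strong bridges once in linear time via \cite{Italiano2012}, spend $O(m)$ per strong bridge on the strongly connected components of $G\setminus e$, and perform the partition refinement in $O(n)$ per iteration by labelling vertices and bucket (radix) sorting rather than intersecting blocks pairwise. Your explicit remark that a naive pairwise intersection would be too slow just spells out the detail the paper handles implicitly with its bucket-sorting step, so there is nothing further to fix.
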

\begin{proof}
The strong bridges of $G$ can be computed in linear time by \cite{Italiano2012}. In each iteration of Step 3, we can compute the strongly connected components of $G\setminus e$ in linear-time \cite{dfs:t}.
As we discover the $i$th strongly connected component, we assign label $i$ ($i \in \{1,\ldots,n\}$) to the vertices in $S_i$.
Then, the refinement of the current blocks in Step 3.1 can be done in $O(n)$ time with bucket sorting. So each iteration takes $O(m)$ time.
\end{proof}

Note that the above bound is $O(mn)$ in the worst case, since for any digraph $b^{\ast} \le 2n-2$ by Lemma \ref{lemma:strong-bridges-number}. Despite the fact that removing a single strong bridge at a time does not yield an efficient algorithm, we will make use of this idea, in a more restricted way, in our linear-time algorithm.

\subsection{A recursive algorithm}
\label{section:2ECB-recursive}

In order to obtain a faster algorithm we need to determine how multiple strong bridges affect the partition of the vertices into blocks. We achieve this by using the dominator tree $D(s)$ of the flow graph $G(s)$, for an arbitrarily chosen start vertex $s$.
We do this as follows.
First we consider the computation of the $2$-edge-connected block that contains a specific vertex $v$. Let $w$ be a vertex other than $v$.
We say that $w$ is $2$-\emph{edge-connected from $v$} if there are two edge disjoint paths from $v$ to $w$.
Analogously, $w$ is $2$-\emph{edge-connected to $v$} if there are two edge disjoint paths from $w$ to $v$.
We divide the computation of $[v]_\mathrm{2e}$ in two parts, where the first part finds the set of vertices $[v]_{\overrightarrow{\mathrm{2e}}}$ that are 2-edge-connected from $v$, and the second part finds the set $[v]_{\overleftarrow{\mathrm{2e}}}$ of vertices that are $2$-edge-connected to $v$. Then $[v]_\mathrm{2e}$ is formed by the intersection of these two sets.

Consider the computation of $[v]_{\overrightarrow{\mathrm{2e}}}$. An efficient way to compute this set is based on dominators and bridges.
We compute the dominator tree $D(v)$ and identify the bridges of $G(v)$.
Then for each bridge $e=(u,w)$ we have $d(w)=u$, i.e., each bridge is also an edge in the dominator tree; we mark $w$ in $D(v)$.

\begin{lemma}
\label{lemma:2-connected-from}
$w \in [v]_{\overrightarrow{\mathrm{2e}}}$ if and only if $w$ is not dominated in $G(v)$ by a marked vertex.
\end{lemma}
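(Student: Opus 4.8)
The plan is to prove both directions using the characterization of $2$-edge-connectivity from $v$ via Menger's theorem: $w \in [v]_{\overrightarrow{\mathrm{2e}}}$ if and only if there is no single edge whose removal disconnects $w$ from $v$ in the flow-reachability sense, i.e., if and only if no edge lies on every path from $v$ to $w$. By definition, such an edge is precisely a bridge of $G(v)$. So the statement reduces to showing: some bridge of $G(v)$ lies on every $v$-$w$ path if and only if $w$ is dominated by a marked vertex (a vertex that is the lower endpoint of a bridge).

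For the easy direction, suppose $w$ is dominated in $G(v)$ by a marked vertex $z$, where $(d(z),z)$ is a bridge. Then every path from $v$ to $w$ passes through $z$ (since $z$ dominates $w$), and every path from $v$ to $z$ uses the edge $(d(z),z)$ (since it is a bridge). Concatenating, every $v$-$w$ path uses $(d(z),z)$, so there cannot be two edge-disjoint $v$-$w$ paths, and hence $w \notin [v]_{\overrightarrow{\mathrm{2e}}}$. This gives the contrapositive of the "only if" direction.

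For the converse, suppose $w \notin [v]_{\overrightarrow{\mathrm{2e}}}$, so by Menger there is an edge $(x,y)$ lying on every path from $v$ to $w$. First I would argue that $(x,y)$ must be a bridge of $G(v)$: indeed, since every $v$-$w$ path uses $(x,y)$, in particular $y$ is reachable from $v$ only through $(x,y)$ — more carefully, if some $v$-$y$ path avoided $(x,y)$, we could try to extend it to $w$; the subtlety is that the tail $y \leadsto w$ of the offending path might revisit $x$, so this needs the standard bridge argument. Once $(x,y)$ is a bridge, $y$ is marked; it remains to show $y$ dominates $w$. If not, there is a $v$-$w$ path $P$ avoiding $y$. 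But $P$ must use $(x,y)$, hence must contain $y$ — contradiction. Therefore $y$ is a marked dominator of $w$.

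The main obstacle is the step asserting that an edge lying on every $v$-$w$ path is a bridge of $G(v)$ (equivalently, lies on every $v$-$y$ path): a path from $v$ to $w$ avoiding $(x,y)$ need not exist even though a path from $v$ to $y$ avoiding $(x,y)$ does, because reachability to $w$ may force the edge for reasons beyond $y$'s reachability. I expect to handle this by taking the deepest such forced edge, or equivalently by invoking the known structural fact (cf. Tarjan~\cite{T74TR}) that the bridges of $G(v)$ are exactly the tree edges $(d(w),w)$ of $D(v)$ for which $w$ is separated from $v$ by that single edge, together with a short argument that the set of vertices reachable from $v$ without using any bridge is exactly $[v]_{\overrightarrow{\mathrm{2e}}}$.
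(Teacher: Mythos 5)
Your proposal is correct and follows essentially the same route as the paper: Menger's theorem to get a single separating edge, then the identification of such an edge with a bridge of $G(v)$ whose head is a marked vertex dominating $w$ (the paper's proof is exactly this, stated even more tersely). The step you flag as the ``main obstacle'' is in fact immediate and needs neither the deepest forced edge nor Tarjan's structural fact: if some $v$-$y$ path avoided $e=(x,y)$, concatenating it with the suffix of any $v$-$w$ path taken after its last use of $e$ yields a $v$-$w$ walk (hence a path) avoiding $e$, contradicting that $e$ lies on every $v$-$w$ path -- revisiting $x$ is harmless since only the edge must be avoided.
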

\begin{proof}
We have that $w \not\in [v]_{\overrightarrow{\mathrm{2e}}}$ if and only if there an edge (strong bridge) that separates $v$ from $w$ in $G$. Then $e=(x,y)$ is such an edge if and only if it is a bridge in $G(v)$, so $y$ is a marked ancestor of $w$ in $D(v)$.
\end{proof}

Lemma \ref{lemma:2-connected-from} implies a straightforward linear-time algorithm to compute $[v]_{\overrightarrow{\mathrm{2e}}}$, given the dominator tree $D(v)$ of $G(v)$.
We use the same algorithm to compute $[v]_{\overleftarrow{\mathrm{2e}}}$, but operate on the reverse graph $G^R(v)$ and its dominator tree $D^R(v)$.
That is, we identify the bridges of the flow graph $G^{R}(v)$, and for each bridge $e=(u,w)$ we mark $w$ in $D^R(v)$.
Note that a vertex $w$ that is marked in $D(v)$ may not be marked in $D^R(v)$ and vice versa.

\begin{corollary}
\label{corollary:2-connected-v}
$w \in [v]_\mathrm{2e}$ if and only if $w$ is not dominated in $G(v)$ and in $G^R(v)$ by any marked vertex. Moreover, $[v]_\mathrm{2e}$ can be computed in $O(m)$ time.
\end{corollary}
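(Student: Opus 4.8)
The plan is to obtain the corollary by combining Lemma~\ref{lemma:2-connected-from} with its mirror image on the reverse graph. First I would use Theorem~\ref{theorem:2ECB}: since $\leftrightarrow_{\mathrm{2e}}$ is an equivalence relation, $[v]_{\mathrm{2e}}$ is exactly the set $\{v\}\cup\{w\neq v : v\leftrightarrow_{\mathrm{2e}} w\}$. By Menger's Theorem~\cite{menger}, for a vertex $w\neq v$ we have $v\leftrightarrow_{\mathrm{2e}} w$ if and only if there are two edge-disjoint paths from $v$ to $w$ and two edge-disjoint paths from $w$ to $v$, i.e.\ if and only if $w\in[v]_{\overrightarrow{\mathrm{2e}}}$ and $w\in[v]_{\overleftarrow{\mathrm{2e}}}$. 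Since $v$ itself lies in both $[v]_{\overrightarrow{\mathrm{2e}}}$ and $[v]_{\overleftarrow{\mathrm{2e}}}$, this gives $[v]_{\mathrm{2e}}=[v]_{\overrightarrow{\mathrm{2e}}}\cap[v]_{\overleftarrow{\mathrm{2e}}}$.

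Next I would observe that $w$ is $2$-edge-connected to $v$ in $G$ if and only if $w$ is $2$-edge-connected from $v$ in $G^{R}$: reversing every edge turns a collection of edge-disjoint paths from $w$ to $v$ into a collection of edge-disjoint paths from $v$ to $w$, and conversely. (We use here that $G$ is strongly connected, so that $G(v)$ and $G^{R}(v)$ are genuine flow graphs.) Applying Lemma~\ref{lemma:2-connected-from} to the flow graph $G(v)$ characterizes $[v]_{\overrightarrow{\mathrm{2e}}}$ as the set of vertices that are not dominated in $G(v)$ by a marked vertex, and applying the same lemma to $G^{R}(v)$ characterizes $[v]_{\overleftarrow{\mathrm{2e}}}$ as the set of vertices that are not dominated in $G^{R}(v)$ by a marked vertex; the marks in $D(v)$ and in $D^{R}(v)$ are assigned independently, so a vertex marked in one tree need not be marked in the other. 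Intersecting the two descriptions yields precisely the claimed statement: $w\in[v]_{\mathrm{2e}}$ if and only if $w$ is dominated by no marked vertex in $G(v)$ and by no marked vertex in $G^{R}(v)$.

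For the running time, I would proceed as follows. Build the flow graphs $G(v)$ and $G^{R}(v)$ and compute their dominator trees $D(v)$ and $D^{R}(v)$ in $O(m)$ time using a linear-time dominators algorithm~\cite{domin:ahlt,dominators:bgkrtw,dominators:poset}. Given the dominator trees, identify the bridges of $G(v)$ and of $G^{R}(v)$ in $O(m)$ time~\cite{T74TR}, and for each bridge $(u,w)$ mark $w$ in the corresponding tree. A single top-down (preorder) traversal of each dominator tree then suffices to compute, for every vertex $w$, a Boolean flag recording whether some ancestor of $w$ (including $w$ itself) is marked; each traversal costs $O(n)$. Finally, scanning the vertices and retaining those whose flag is false in both $D(v)$ and $D^{R}(v)$ produces $[v]_{\mathrm{2e}}$ in $O(n)$ additional time. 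The overall cost is $O(m)$.

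There is essentially no obstacle here, as the statement is a direct corollary; the only points needing a moment's care are the identity ``$2$-edge-connected to $v$ in $G$'' $\Leftrightarrow$ ``$2$-edge-connected from $v$ in $G^{R}$'', which is immediate from the definition of edge reversal, and the fact that the two markings (in $D(v)$ and in $D^{R}(v)$) are independent and both must be respected when taking the intersection that defines $[v]_{\mathrm{2e}}$.
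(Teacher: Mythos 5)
Your proposal is correct and matches the paper's (implicit) argument: the corollary is obtained exactly by applying Lemma~\ref{lemma:2-connected-from} to $G(v)$ and to the reverse flow graph $G^{R}(v)$, intersecting $[v]_{\overrightarrow{\mathrm{2e}}}$ and $[v]_{\overleftarrow{\mathrm{2e}}}$, and using linear-time dominator and bridge computations plus tree traversals for the $O(m)$ bound. No gaps; the reverse-graph identity and the independence of the two markings are precisely the points the paper also relies on.
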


Now our goal is to extend this method so that we discover all blocks $[v]_\mathrm{2e}$, without applying
Corollary \ref{corollary:2-connected-v} for all vertices $v$. Let $s$ be an arbitrarily chosen start vertex.
We first observe that the dominator trees
$D(s)$ and $D^R(s)$ of $G(s)$ and $G^R(s)$, respectively, partition the vertices into sets that contain the $2$-edge-connected blocks, as follows.
Identify the bridges of $G(s)$ (resp., $G^{R}(s)$), and for each bridge $e=(u,w)$ mark $w$ in $D(s)$ (resp. $D^R(s)$) as above.
Remove from $D(s)$ all edges $(d(v),v)$ such that $v$ is marked in $D(s)$, and remove from $D^R(s)$ all edges $(d^R(v),v)$ such that $v$ is marked in $D^R(s)$.
This decomposes the dominator trees $D(s)$ and $D^R(s)$ into forests of rooted trees, where each tree is rooted either at the start vertex $s$ or at a marked vertex. 
In the following, we use the notation $T(v)$ to denote the tree containing vertex $v$ in the decomposition of dominator tree $D(s)$. Note that $T(v)$ is a subtree of $D(s)$ and its root $r_v$ is either $s$ or a marked vertex. Similarly, we denote
by $T^R(v)$ the tree 
containing vertex $v$ in the decomposition of $D^R(s)$. In Figure~\ref{fig:subtrees} we can see an example of a flow graph $G(s)$, its dominator tree $D(s)$ and the decomposion of $D(s)$ into subtrees induced by the removal of all bridges of $G(s)$. 
The following lemma provides a necessary condition for two vertices to be $2$-edge-connected.

\begin{figure}[t!]
\begin{center}
\includegraphics[width=\textwidth]{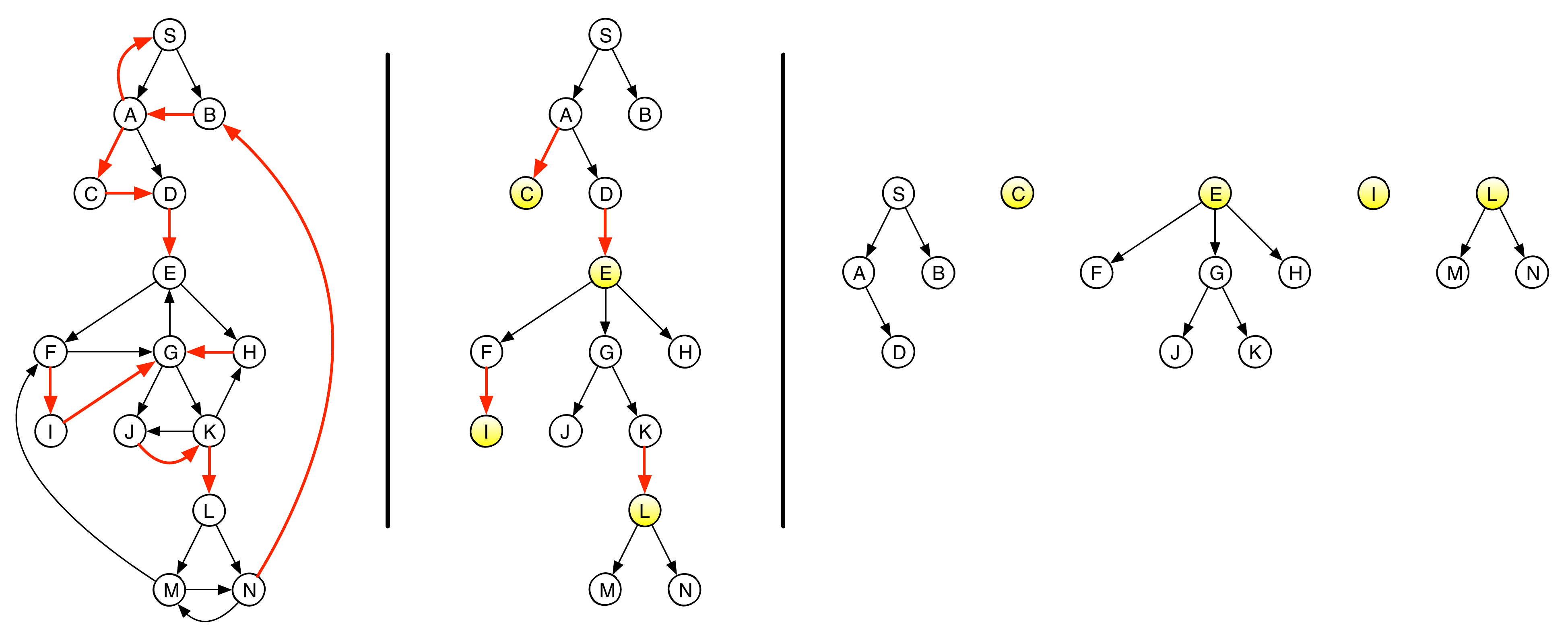}
\caption{A flow graph $G(s)$, its dominator tree $D(s)$ and its partition into the subtrees $T(v)$ induced by the bridges of $G(s)$. Strong bridges of the original graph $G$ and bridges of the flow graph $G(s)$ and are shown in red;  marked vertices are shown in yellow. (Better viewed in color.) \label{fig:subtrees}
}
\end{center}
\end{figure}

\begin{lemma}
\label{lemma:2-connectivity-necessary-condition}
$[v]_\mathrm{2e} = [w]_\mathrm{2e}$ only if $T(v)=T(w)$ and $T^R(v)=T^R(w)$.
\end{lemma}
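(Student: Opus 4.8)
The plan is to show the contrapositive in its natural direction: if $T(v) \neq T(w)$ or $T^R(v) \neq T^R(w)$, then $v$ and $w$ are not $2$-edge-connected. By symmetry between $G$ and $G^R$ it suffices to handle the case $T(v) \neq T(w)$, i.e., $v$ and $w$ lie in different subtrees of the decomposition of $D(s)$. I would first recall why the decomposition is well-defined: each bridge $e=(u,w')$ of $G(s)$ satisfies $d(w')=u$ by the standard fact quoted before Lemma~\ref{lemma:2-connected-from}, so cutting the tree edges above marked vertices genuinely splits $D(s)$ into subtrees rooted at $s$ or at marked vertices.

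Next I would argue that crossing from one subtree to another forces passage through a strong bridge. Suppose without loss of generality that $T(w)$ is not an ancestor subtree of $T(v)$ in the natural partial order on subtrees; pick the marked vertex $r$ that is the root of $T(w)$ (it is marked, so the tree edge $(d(r),r)$ is a bridge $e$ of $G(s)$). If $v \notin$ the subtree of $D(s)$ rooted at $r$, then every path from $s$ to $w$ passes through $e$, hence $e$ is a bridge of $G(s)$ and so a strong bridge of $G$; moreover, by Lemma~\ref{lemma:2-connected-from}, $w$ is dominated in $G(v)$ by $r$ only if $r$ is on every $v$-$w$ path — here I need to be careful. The cleaner route is: among $v, w$, at least one of them has the other's subtree-root as a proper strong-bridge separator. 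Concretely, let $P$ be the tree path in $D(s)$ between $v$ and $w$; since $T(v)\neq T(w)$, this path uses at least one removed edge, i.e., an edge $(d(z),z)$ with $z$ marked and hence $(d(z),z)$ a bridge of $G(s)$, thus a strong bridge $e$ of $G$. This bridge $e$ separates $s$ from $z$, and by the dominator characterization it separates $s$ from every vertex in the subtree rooted at $z$ while not separating $s$ from vertices outside that subtree. Exactly one of $v,w$ lies in the subtree rooted at $z$ — say it is $w$ (the other case is symmetric). Then every path from $s$ to $w$ uses $e$, but not every path from $s$ to $v$ does.

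From here I would derive that $e$ separates $v$ from $w$ in $G$, which by the characterization used in Algorithm \textsf{Simple2ECB} (strong bridges separating pairs put them in different SCCs of $G\setminus e$) immediately gives $v \not\leftrightarrow_{\mathrm{2e}} w$. The key point is that $e$ lies on every $v$-to-$w$ path: take any path $\pi$ from $v$ to $w$ in $G$. Concatenating a path from $s$ to $v$ (which need not use $e$) with $\pi$ yields a path from $s$ to $w$, which must use $e$; but I must rule out that the only occurrence of $e$ is on the $s$-to-$v$ prefix. This is where I would invoke that $e=(d(z),z)$ is a bridge in $G(s)$ with $z$ dominating $w$ in $G(s)$: any walk from $s$ reaching $w$ enters the subtree rooted at $z$ through $z$, and since $e$ is the unique edge entering "across the bridge," a walk that reaches $w$ must traverse $e$ at a point after which it stays in the closure... actually the crispest argument is to work in $G(v)$: since $v$ is not in the subtree rooted at $z$ in $D(s)$, and $z$ is a strong articulation-point-style separator, one shows $z$ dominates $w$ in $G(v)$ as well, and the edge into $z$ is a bridge of $G(v)$, so $z$ is marked in $D(v)$ and dominates $w$ — Lemma~\ref{lemma:2-connected-from} then yields $w \notin [v]_{\overrightarrow{\mathrm{2e}}}$, hence $w \notin [v]_{\mathrm{2e}}$.

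The main obstacle I anticipate is precisely this transfer of the separation property from the fixed flow graph $G(s)$ to the flow graph $G(v)$ rooted at $v$: the subtree structure of $D(s)$ is defined relative to $s$, but $2$-edge-connectivity from $v$ is about $G(v)$, so I need a lemma saying that a strong bridge $e=(d(z),z)$ of $G$ which separates $s$ from $w$ but not from $v$ must also be a bridge of $G(v)$ separating $v$ from $w$. I would prove this by noting that $z$ is a strong articulation-point companion of the strong bridge $e$: removing $e$ disconnects $w$ from $d(z)$ in $G$, and since $v$ remains strongly connected to $d(z)$ in $G\setminus e$ (as $e$ does not separate $s$ from $v$ and $G$ is strongly connected), $v$ and $w$ fall into different SCCs of $G\setminus e$. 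That is the clean statement, and it sidesteps reasoning about $D(v)$ entirely. With that in hand the proof is short; the care is all in checking that "not separated from $s$" upgrades to "in the same SCC of $G\setminus e$ as $d(z)$," which follows from strong connectivity of $G$ together with $e$ not being on every $s$-to-$v$ path.
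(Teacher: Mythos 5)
Your overall route is sound and is essentially the paper's argument in mirrored form: you take the contrapositive, observe that if $T(v)\neq T(w)$ then some removed edge $e=(d(z),z)$ lies on the $D(s)$-tree path between $v$ and $w$, that $e$ is a bridge of $G(s)$ whose head $z$ dominates exactly one of $v,w$ (your "exactly one lies in the subtree rooted at $z$" claim is correct), and you then combine a path from $s$ avoiding $e$ with a $v$--$w$ path to force a contradiction with domination. (The paper argues by contradiction instead, picking a marked ancestor $x$ of $r_v$ below the nearest common ancestor of $r_v,r_w$ and concatenating an $s$-to-$w$ path avoiding $(d(x),x)$ with a $w$-to-$v$ path avoiding it; same idea, opposite orientation.) Moreover, the worry you raise in your third paragraph is vacuous: since $z$ does not dominate $v$, you may choose the $s$-to-$v$ prefix to avoid $e$, so any occurrence of $e$ in the concatenated walk lies on $\pi$; and "every path from $s$ to $w$ contains $e$" extends to walks because a walk avoiding $e$ contains a simple path avoiding $e$. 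That concatenation argument already completes the proof.

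The genuine problem is the "clean statement" you finally settle on: the claim that $v$ \emph{remains strongly connected to} $d(z)$ in $G\setminus e$, which you justify only by "$e$ does not separate $s$ from $v$ and $G$ is strongly connected," and which you flag as the step where "the care is all." That inference is invalid and the claim is false in general: take the $5$-cycle $s\to v\to d(z)\to z\to w\to s$ with $e=(d(z),z)$; here $e$ does not separate $s$ from $v$, yet in $G\setminus e$ the vertex $d(z)$ has no outgoing edge, so $v$ is not in its strongly connected component. Fortunately the strengthening is unnecessary. From what you already established -- in $G\setminus e$, $v$ is reachable from $s$ (as $z$ does not dominate $v$) while $w$ is not (as $e$ is a bridge of $G(s)$ and $z$ dominates $w$) -- it follows at once that $G\setminus e$ contains no path from $v$ to $w$ whatsoever, hence $v$ and $w$ lie in different strongly connected components of $G\setminus e$ and $v\not\leftrightarrow_{\mathrm{2e}}w$; the symmetric case ($v$ below $z$, $w$ not) kills the $w$-to-$v$ direction instead, which is equally sufficient. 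With that repair (or simply by keeping your third-paragraph concatenation argument and dropping the detour through $G(v)$ and through strong connectivity with $d(z)$), the proof is complete.
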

\begin{proof}
We show that $[v]_\mathrm{2e} = [w]_\mathrm{2e}$ implies $T(v)=T(w)$. Then the same argument applied on $G^R(s)$ shows that $T^R(v)=T^R(w)$.
Suppose by contradiction that $[v]_\mathrm{2e} = [w]_\mathrm{2e}$ but $T(v)\neq T(w)$, i.e., $w \not\in T(v)$. Assume that $r_v$ is not an ancestor of $r_w$ in $D(s)$. (If $r_v$ is an ancestor of $r_w$, swap $v$ and $w$.) Let $d$ be the nearest common ancestor of $r_v$ and $r_w$ in $D(s)$. Then $d \not= r_v$ by the above assumption. Let $x$ be the shallowest marked ancestor of $r_v$ that is a descendant of $d$
in $D(s)$.
Since $[v]_\mathrm{2e} = [w]_\mathrm{2e}$, then there must be a path $P$ in $G$ from $w$ to $v$ that avoids edge $e=(d(x),x)$. Since $r_v$ is not an ancestor of $r_w$ in $D(s)$, there is a path $Q$ in $G$ from $s$ to $w$ that avoids $e$. If $v \in Q$ then the part of $Q$ from $s$ to $v$ avoids $e$, which contradicts the fact that $e$ is a bridge, i.e., it induces a cut that separates $s$ from $v$ in $G$. Otherwise, $v \not\in Q$ then $Q$ followed by $P$ ($Q\cdot P$) is a path from $s$ to $v$ that avoids $e$, a contradiction.
\end{proof}

\begin{figure}[t!]
\begin{center}
\includegraphics[width=0.6\textwidth]{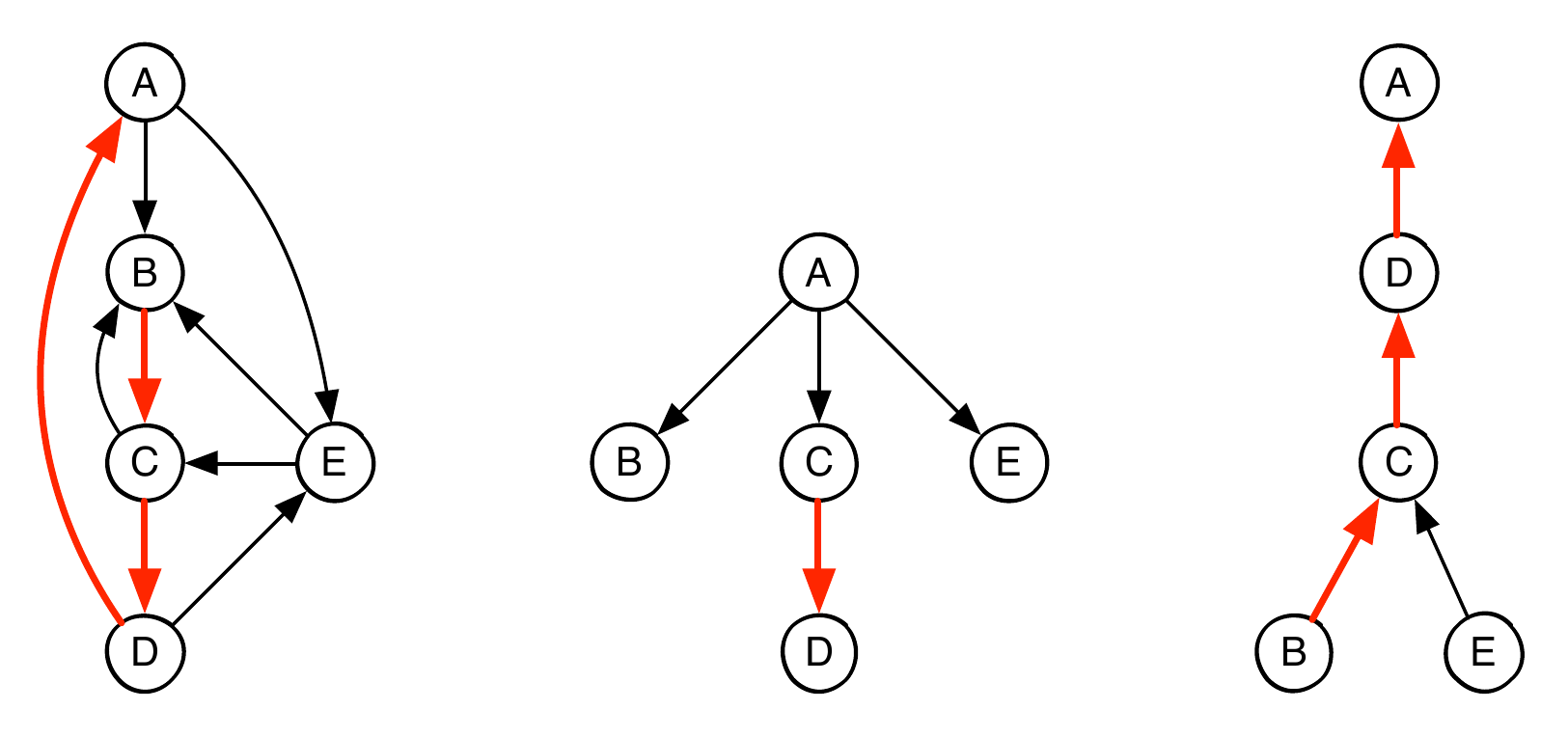}
\caption{A strongly connected digraph $G$ and its dominator trees $D(A)$ and $D^R(A)$ rooted at vertex $A$. (The edges of the dominator tree $D^R(A)$ are shown directed from child to parent.) Strong bridges are shown in red (better viewed in color). Note that vertices $C$ and $E$ lie in the same subtree in both $D(A)$ and $D^R(A)$ but they are not $2$-edge-connected, as they are separated by the strong bridge $(C,D)$.
}
\label{fig:separate}
\end{center}
\end{figure}

Note that the condition given in Lemma \ref{lemma:2-connectivity-necessary-condition} is not sufficient,  as
two vertices may be separated by a strong bridge and still lie in the same subtree in both $D(s)$ and $D^R(s)$ (see Figure \ref{fig:separate}).
The main challenge in this approach is thus to discover which vertices in the same subtree are separated by a strong bridge. To tackle this challenge, we provide some key results regarding edges and paths that connect different subtrees $T(r)$.
We will use the \emph{parent property} of dominator trees \cite{domcert}, that we state next.

\begin{lemma}
\label{lemma:parent-property} \emph{(Parent property of the dominator tree \cite{domcert}.)}
For all $(v, w) \in E$, $d(w)$ is an ancestor of $v$ in $D(s)$.
\end{lemma}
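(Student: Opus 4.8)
The plan is to argue by contradiction, using only the definition of the dominator relation together with the defining property of a flow graph that every vertex is reachable from the start vertex $s$. Fix an edge $(v,w) \in E$. We may assume $w \neq s$, since otherwise $d(w)$ is not even defined and there is nothing to prove. Set $u = d(w)$, the immediate dominator of $w$. Note that $u$ is in particular a \emph{proper} dominator of $w$, so $u \neq w$; this small observation will matter below. Our goal is to show that $u$ is an ancestor of $v$ in $D(s)$, which by the tree characterization of the dominator relation is the same as showing that $u$ dominates $v$.

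Assume, for contradiction, that $u$ does not dominate $v$. By definition of the dominator relation there is then a path $\pi$ from $s$ to $v$ in $G(s)$ that does not pass through $u$ (such a path exists because $v$ is reachable from $s$). Now append the edge $(v,w)$ to $\pi$ and consider two cases. If $w$ already occurs on $\pi$, truncate $\pi$ at its first occurrence of $w$, obtaining a path from $s$ to $w$ that still avoids $u$. If $w$ does not occur on $\pi$, then $\pi$ followed by $(v,w)$ is a path from $s$ to $w$, and the only vertex it contains beyond those of $\pi$ is $w \neq u$, so it too avoids $u$. In either case we have produced a path from $s$ to $w$ that avoids $u = d(w)$, contradicting the fact that $u$ dominates $w$. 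Hence $u$ dominates $v$, i.e., $d(w)$ is an ancestor of $v$ in $D(s)$.

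I do not expect a genuine obstacle here: the statement is a classical structural fact and the argument is short. The only place that requires a moment of care is the handling of non-simple paths (walks): one must make sure that appending $(v,w)$ does not reintroduce $u$ and, in the case where $w$ is already on $\pi$, that truncating yields a bona fide $s$-to-$w$ path still avoiding $u$. Isolating the remark $u \neq w$ and splitting into the two cases above is exactly what keeps this step clean.
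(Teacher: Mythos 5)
Your proof is correct. The paper itself does not prove this lemma; it simply cites it (as the parent property of the dominator tree) from the Georgiadis--Tarjan certification paper, so there is no in-paper argument to compare against. Your contradiction argument is the standard first-principles one: if $d(w)$ failed to dominate $v$, an $s$-to-$v$ path avoiding $d(w)$ extended by the edge $(v,w)$ (truncated at the first occurrence of $w$ if $w$ already appears) would yield an $s$-to-$w$ path avoiding $d(w)$, contradicting that $d(w)$ dominates $w$. The two points you isolate -- that $d(w)\neq w$ because an immediate dominator is a proper dominator, and the truncation step ensuring the walk really avoids $d(w)$ -- are exactly the details needed, and the degenerate cases ($w=s$ excluded since $d(s)$ is undefined; $v=s$ forcing $d(w)=s$) are handled consistently with the paper's reflexive ancestor convention.
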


Now we prove some structural properties for paths that connect vertices in different subtrees.

\begin{lemma}
\label{lemma:parent-sibling-partition}
Let $e=(u,v)$ be an edge of $G$ such that $T(u) \not= T(v)$ and let $r_v$ be the root of $T(v)$. Then either $u=d(v)$ and $e$ is a bridge in $G(s)$, or $u$ is a proper descendant of $r_v$ in $D(s)$.
\end{lemma}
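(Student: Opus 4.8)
The plan is to reduce everything to the \emph{parent property} of $D(s)$ (Lemma~\ref{lemma:parent-property}) together with the definitions of a bridge of $G(s)$ and of the decomposition of $D(s)$ into the subtrees $T(\cdot)$. Applying Lemma~\ref{lemma:parent-property} to the edge $e=(u,v)$ gives that $d(v)$ is an ancestor of $u$ in $D(s)$; the degenerate case $v=s$ is handled first and separately, since then $r_v=s$ is the root of $D(s)$, so $u$ is a descendant of $r_v$, and $u\neq s$ (otherwise $e$ is a self-loop and $T(u)=T(v)$), giving the second alternative. So we may assume $v\neq s$ and conclude that either $u=d(v)$ or $u$ is a proper descendant of $d(v)$, and I would split into these two cases.

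In the first case, $u=d(v)$, I would show $e$ is a bridge: $u$ and $v$ are then the endpoints of the tree edge $(d(v),v)$ of $D(s)$, and since $T(u)\neq T(v)$ this edge was deleted when forming the decomposition; by construction an edge $(d(w),w)$ is deleted exactly when $w$ is marked, and $w$ is marked exactly when $(d(w),w)$ is a bridge of $G(s)$. Hence $u=d(v)$ and $e$ is a bridge, which is the first alternative in the statement.

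In the second case, $u$ is a proper descendant of $d(v)$ (so $u\neq d(v)$), and I would further distinguish whether $v$ is marked. If $v$ is marked, then $r_v=v$ and $(d(v),v)$ is a bridge of $G(s)$; the key step here is a short path-surgery argument: if $u$ were \emph{not} a descendant of $v$, there would be a path from $s$ to $u$ avoiding $v$ (hence avoiding the edge $(d(v),v)$, whose head is $v$), and appending $e=(u,v)\neq(d(v),v)$ to it would yield an $s$-to-$v$ path avoiding the bridge $(d(v),v)$, a contradiction; thus $u$ is a descendant of $v$, and $u\neq v$ since $T(u)\neq T(v)$, so $u$ is a proper descendant of $r_v=v$. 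If $v$ is not marked, then $r_v$ (being $s$ or a marked vertex, while $v$ is neither) is a \emph{proper} ancestor of $v$, hence an ancestor of $d(v)$, hence an ancestor of $u$; and $u\neq r_v$, because $r_v\in T(v)$ would otherwise force $T(u)=T(v)$. So in every subcase $u$ is a proper descendant of $r_v$, as claimed.

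The main obstacle is the path-surgery step in the subcase where $v$ is marked: one has to notice that avoiding the \emph{vertex} $v$ already forces avoiding the \emph{edge} $(d(v),v)$, and that appending $e$ does not re-introduce that edge precisely because $u\neq d(v)$. The remaining work is routine bookkeeping about ancestor relations in $D(s)$ and about which vertices land in the same subtree of the decomposition.
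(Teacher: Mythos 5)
Your proof is correct and follows essentially the same route as the paper's: both rest on the parent property of $D(s)$ plus the same path-surgery contradiction (a path from $s$ to $u$ avoiding $v$, extended by $e\neq(d(v),v)$, would bypass the bridge $(d(v),v)$). The only difference is organizational: you split on $v=s$, $u=d(v)$, and whether $v$ is marked, while the paper splits on whether $e$ is a bridge and whether $v$ is an ancestor of $u$ and then shows $d(v)\in T(v)$ by the same contradiction.
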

\begin{proof}
If $e$ is a bridge in $G(s)$ then $u=d(v)$ and the lemma holds. Suppose that $e$ is not a bridge, so $u \not= d(v)$.  If $v$ is an ancestor of $u$ in $D(s)$ then the lemma holds. If not, then by Lemma \ref{lemma:parent-property}, $d(v)$ is a proper ancestor of $u$ in $D(s)$.
We show that $d(v) \in T(v)$, which implies the lemma. Assume by contradiction that $d(v) \not\in T(v)$. Then $(d(v),v)$ is a bridge and $v = r_v$. Since $v$ is not an ancestor of $u$ in $D(s)$, there is a path $P$ from $s$ to $u$ that does not contain $v$. Then $P\cdot e$ is a path from $s$ to $v$ that avoids the bridge $(d(v),v)$, a contradiction.
\end{proof}

\begin{lemma}
\label{lemma:partition-paths}
Let $r$ be a marked vertex. Let $v$ be any vertex that is not a descendant of $r$ in $D(s)$. Then there is path from $v$ to $r$ that does not contain any vertex in $T(r)\setminus r$. Moreover, all simple paths from $v$ to any vertex in $T(r)$ contain the edge $(d(r),r)$.
\end{lemma}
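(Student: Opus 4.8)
The plan is to prove the two assertions in reverse order: I will first establish the ``Moreover'' part (that every simple path from $v$ into $T(r)$ uses the edge $(d(r),r)$) and then derive the existence of the claimed path from it. Two facts are used repeatedly: since $r$ is marked, $(d(r),r)$ is a bridge of $G(s)$, so every path from $s$ to $r$ contains $(d(r),r)$; and every vertex $z\in T(r)$ is a descendant of $r$ in $D(s)$ (because $T(r)$ is a subtree of $D(s)$ rooted at the marked vertex $r$), hence $r$ dominates $z$ in $G(s)$.

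First I would prove that any simple path $P$ from $v$ to a vertex $z\in T(r)$ contains $(d(r),r)$. Since $r$ dominates $z$, every path from $s$ to $z$ passes through $r$; taking the portion up to the first occurrence of $r$ and using that $(d(r),r)$ is a bridge, it follows that every walk from $s$ to $z$ contains the edge $(d(r),r)$. On the other hand, $v$ is not a descendant of $r$ in $D(s)$, so $r$ does not dominate $v$, and hence there is a path $Q$ from $s$ to $v$ that avoids $r$ — and therefore also avoids the edge $(d(r),r)$, whose head is $r$. If some simple path $P$ from $v$ to $z$ avoided $(d(r),r)$, then $Q\cdot P$ would be a walk from $s$ to $z$ avoiding $(d(r),r)$, contradicting the previous sentence. (Here $z\neq v$, since $v$ being not a descendant of $r$ means $v\notin T(r)$.)

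Next I would obtain the first statement. By strong connectivity of $G$ there is a simple path $R$ from $v$ to $r$; let $z$ be the first vertex of $R$ lying in $T(r)$, which exists because $r\in T(r)$. The prefix $R'$ of $R$ from $v$ to $z$ is a simple path into $T(r)$, so by the part just proved it contains $(d(r),r)$ and in particular contains the vertex $r$. But every vertex of $R'$ strictly before its last vertex $z$ lies outside $T(r)$, so the occurrence of $r$ on $R'$ must be at $z$ itself, i.e.\ $z=r$. Thus $R'$ is a path from $v$ to $r$ whose only vertex in $T(r)$ is its endpoint $r$, so it avoids $T(r)\setminus r$, as required.

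I do not expect a real obstacle here; the only points that need care are (i) upgrading the statement ``no path from $s$ to $z$ avoids $(d(r),r)$'' from paths to walks, so that the concatenation $Q\cdot P$ can be invoked directly, and (ii) the degenerate cases — e.g.\ $v=s$, and the convention that a vertex is its own ancestor/descendant, which is exactly what ensures $v\notin T(r)$ and $r\neq s$. Both are routine.
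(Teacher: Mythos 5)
Your proof is correct, and it takes a mildly different route from the paper's. The paper lets $P$ be an arbitrary path from $v$ to $r$, looks at the first edge $(u,w)$ of $P$ with $w\in T(r)$, and invokes Lemma~\ref{lemma:parent-sibling-partition} to conclude that either $(u,w)=(d(r),r)$ --- which yields both claims at once --- or $u$ is a proper descendant of $r$, a case it then rules out by the same concatenation trick you use (prepend a path from $s$ to $v$ avoiding $r$ and contradict the fact that $(d(r),r)$ is a bridge on every path from $s$ into $T(r)$). You bypass Lemma~\ref{lemma:parent-sibling-partition} entirely: you prove the ``moreover'' statement first, directly from the two facts that every $s$-to-$z$ walk with $z\in T(r)$ must use the bridge $(d(r),r)$ (since $r$ dominates $z$) while some $s$-to-$v$ path avoids $r$ (since $r$ does not dominate $v$), and you then derive the existence of the desired path by truncating a simple $v$-to-$r$ path at its first vertex in $T(r)$ and observing that this vertex must be $r$ because the truncated prefix already contains the edge $(d(r),r)$. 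Your version is somewhat more elementary and self-contained, needing only the definitions of bridge and dominator plus the routine path-to-walk upgrade you flag; the paper's version yields the marginally stronger structural fact that the first edge entering $T(r)$ on any such path is exactly $(d(r),r)$, in the same style in which Lemma~\ref{lemma:parent-sibling-partition} is exploited elsewhere (e.g.\ in the proof of Lemma~\ref{lemma:partition-auxiliary-paths}). Your truncation step and your treatment of the degenerate cases ($v=s$, $v\notin T(r)$) are sound.
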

\begin{proof}
Since $v$ is not a descendant of $r$ in $D(s)$, $v \not\in T(r)$. Graph $G$ is strongly connected, so it contains a path from $v$ to $r$. Let $P$ be any such path. Let $e=(u,w)$ be the first edge on $P$ such that $w \in T(r)$. Then by Lemma \ref{lemma:parent-sibling-partition},  either $e=(d(r),r)$ or $u$ is a proper descendant of $r$. In the first case the lemma holds. Suppose $u$ is a proper descendant of $r$. Since $v$ is not a descendant of $r$ in $D(s)$, there is a path $Q$ from $s$ to $v$ in $G$ that does not contain $r$. Then $Q$ followed by the part of $P$ from $v$ to $w$ is a path from $s$ to $w$ that avoids $d(r)$, a contradiction.
\end{proof}

We introduce the notion of \emph{auxiliary graphs} that plays a crucial role in our algorithm. It provides a decomposition of the input digraph $G$ into smaller digraphs (not necessarily subgraphs of $G$) that maintain the original $2$-edge-connected blocks.
For each subtree $T(r)$ with root $r$, such that $r$ is not a leaf in $D(s)$, we build the \emph{auxiliary graph $G_r = (V_r, E_r)$ of $r$} as follows. The vertex set $V_r$ contains a set $V_r^o$ of \emph{ordinary} vertices that are the vertices of $T(r)$, and a set $V_r^a$ of \emph{auxiliary} vertices.
The edge set $E_r$ contains all edges in $G=(V,E)$ induced by the ordinary vertices (i.e., edges $(u,v)\in E$ such that $u \in T(r)$ and $v \in T(r)$), together with some edges that have at most one endpoint in $T(r)$ and are either bridges of $G(s)$ or \emph{shortcut} edges that correspond to paths in $G$. We define those edges as follows.
Let $v$ be a vertex in $T(r)$.
We say that $v$ is a \emph{boundary vertex in} $T(r)$ if $v$ has a marked child in $D(s)$. For each marked child $w$ of $v$ in $D(s)$ we add a copy of $w$ in $V^a_r$, and add the edge $(v,w)$ in $E_r$. Also, if $r$ is marked ($r \not= s$) then we add a copy of $d(r)$ in $V_r^a$, and add the edge $(d(r),r)$ in $E_r$. We also add in $E_r$ the following shortcut edges for edges $(u,v)$ of the following type:
(a) If $u$ is ordinary and $v$ is not a descendant of $r$, then we add the shortcut edge $(u,d(r))$. (b) If $v$ is ordinary and $u$ is a proper descendant in $D(s)$ of a boundary vertex $w$, then we add the shortcut edge $(z,v)$ where $z$ is the child of $w$ that is an ancestor of $u$ in $D(s)$. (c) Finally, if $u$ is a proper descendant in $D(s)$ of a boundary vertex $w$ and $v$ is not a descendant of $r$, then we add the shortcut edge $(z,d(r))$, where $z$ is the child of $w$ that is an ancestor of $u$ in $D(s)$. We note that we do not keep multiple (parallel) shortcut edges (see Figure~\ref{fig:auxiliary}).

\begin{figure}[t!]
\begin{center}
\includegraphics[width=\textwidth]{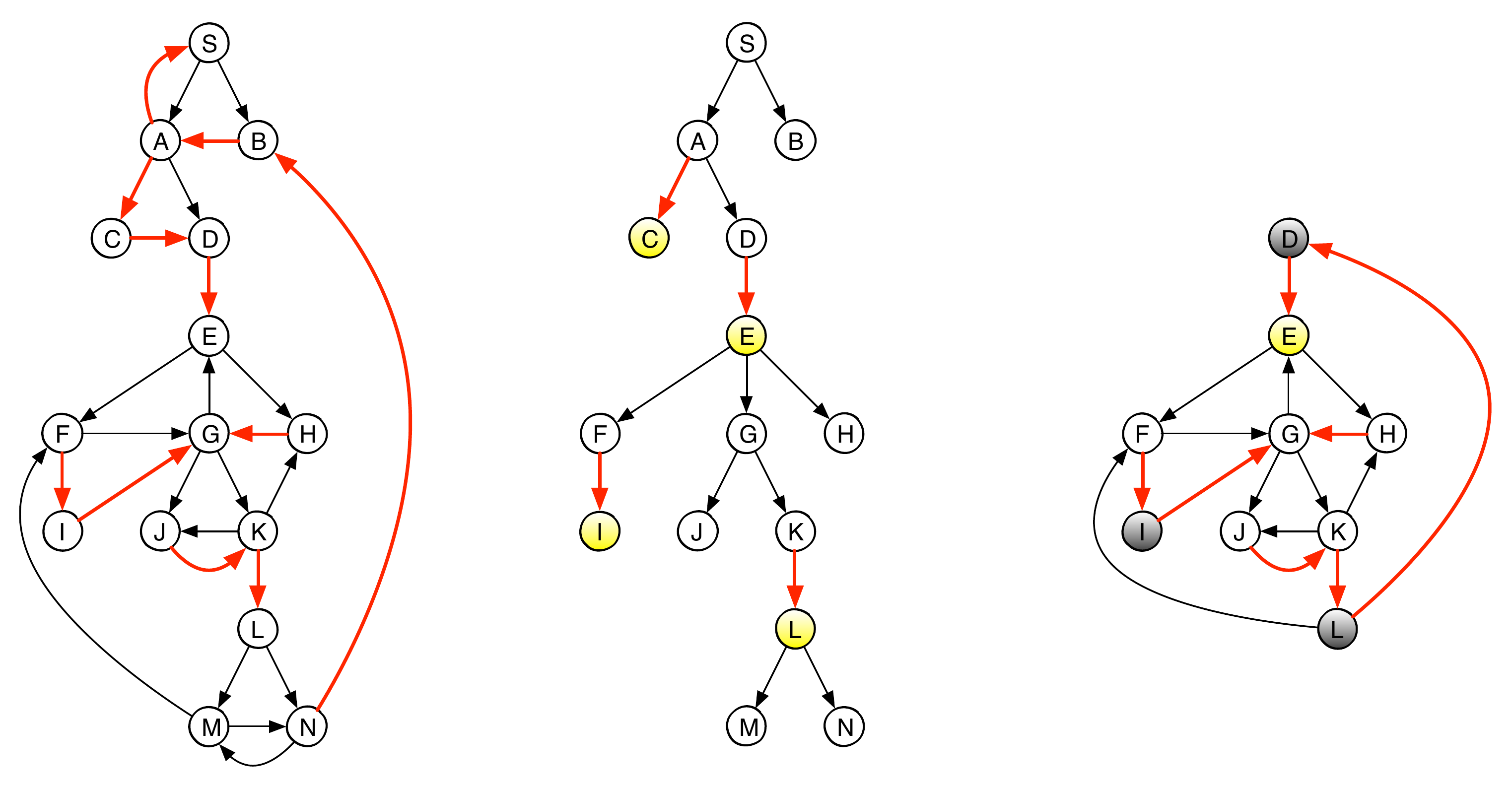}
\caption{The flow graph $G(S)$ and its dominator tree $D(S)$ from Figure~\ref{fig:subtrees}, together with the auxiliary graph of vertex $E$. Strong bridges are red, marked vertices are yellow, and auxiliary vertices are gray.  (Better viewed in color.)  Edge $(L,D)$ is a shortcut edge that corresponds to a path in $G$ from $L$ to $D$, e.g., $L, N, B, A, D$.
}
\label{fig:auxiliary}
\end{center}
\end{figure}

\begin{lemma}
\label{lemma:partition-subgraphs-size}
If $G(s)$ has $b$ bridges then the auxiliary graphs $G_r$ have at most $n+2b$ vertices and $m+2b$ edges in total.
\end{lemma}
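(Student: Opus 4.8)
The plan is to account separately for the ordinary vertices, the auxiliary vertices, the induced edges, the bridge copies, and the shortcut edges, charging each to an object of $G$ (a vertex or an edge) or to a bridge of $G(s)$, and using the dominator-tree lemmas proved above to keep the charges bounded.

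For the vertices, I would first note that $G(s)$ has exactly $b$ marked vertices, since marking happens once per bridge $(u,w)$, at the lower endpoint $w$, and distinct bridges have distinct lower endpoints ($u=d(w)$ is forced). The ordinary vertex set of $G_r$ is $T(r)$, and the subtrees $T(r)$ arising in the decomposition of $D(s)$ are pairwise disjoint subsets of $V$, so there are at most $n$ ordinary vertices overall. Every auxiliary vertex is created by one of two rules: a marked child $w$ of a boundary vertex $d(w)$ is copied into the single auxiliary graph whose subtree contains $d(w)$, and (when $w$ is itself a subtree root) $d(w)$ is copied into $G_w$. Each marked vertex $w$ thus gives rise to at most two auxiliary copies, so there are at most $2b$ auxiliary vertices, and the vertex bound $n+2b$ follows.

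For the edges, classify the edges of $G$ by the decomposition: \emph{internal} (both endpoints in one $T(r)$), a \emph{bridge} of $G(s)$, or \emph{side-crossing} otherwise; write $x$ for the number of side-crossing edges, so the internal edges number $m-b-x$. The induced edges of the $G_r$'s are exactly the internal edges of $G$, each appearing in one $G_r$, contributing $m-b-x$. The edges $(v,w)$ added for marked children $w$ of boundary vertices, together with the edges $(d(r),r)$ added when $r$ is marked, are copies of bridges of $G(s)$; each bridge $(d(r),r)$ appears at most twice --- once in $G_r$ and once in the auxiliary graph of $T(d(r))$ --- so these contribute at most $2b$. It then remains to bound the shortcut edges of types (a), (b), (c); I would aim to show that, after parallel shortcuts are discarded, they contribute at most $b+x$ in total, which gives $(m-b-x)+2b+(b+x)=m+2b$.

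The shortcut count is the heart of the matter and the step I expect to be the main obstacle. By the parent property (Lemma~\ref{lemma:parent-property}) and Lemma~\ref{lemma:parent-sibling-partition}, every side-crossing edge $(u,v)$ has $r_v$ a strict ancestor of $r_u$ in the decomposition and has $v\notin\mathrm{desc}(r_u)$, so it forces a type-(a) shortcut in $G_{r_u}$, a type-(c) shortcut in $G_{r'}$ for each $r'$ strictly between $r_u$ and $r_v$, and a type-(b) shortcut in $G_{r_v}$ --- so one edge spawns a whole chain of shortcuts, and no naive ``one shortcut per edge'' charging can work. The argument must instead exploit that parallel shortcut edges are suppressed: a type-(c) shortcut $(z,d(r))$ is determined by the marked vertex $z$ alone, giving at most $b$ of them; a type-(a) shortcut is determined by its ordinary source and a type-(b) shortcut by the (exit vertex, target) interface it uses, and grouping the side-crossing edges by source and by interface one argues that the surviving type-(a) and type-(b) shortcuts together number at most what the $x$ side-crossing edges can pay for. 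Pinning down this de-duplicated count precisely --- rather than the routine vertex count or the easy bridge count --- is where the delicate reasoning is concentrated.
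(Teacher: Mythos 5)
Your vertex count, your bound on bridge copies, and your bound on type-(c) shortcuts all coincide with the paper's accounting (at most two auxiliary copies per marked vertex, at most two occurrences per bridge, at most one auxiliary-to-auxiliary edge per copy of a marked vertex). The problem is exactly the step you flag as the obstacle: the de-duplicated count of type-(a) and type-(b) shortcuts is never actually proved (``I would aim to show'', ``one argues'' stand in for the argument), and the inequality you are aiming for --- types (a) and (b) together bounded by what the $x$ side-crossing edges ``pay for'', so that the grand total lands on $m+2b$ --- is not provable. A single crossing edge $(u,v)$ genuinely creates a type-(a) shortcut $(u,d(r_u))$ in the auxiliary graph where $u$ is ordinary \emph{and} a type-(b) shortcut $(z,v)$ in the auxiliary graph where $v$ is ordinary; if the crossing edges have pairwise distinct tails and pairwise distinct heads, no parallel-edge suppression occurs and they contribute $2x$, not $x$. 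Concretely, take vertices $s,v_1,\dots,v_k,w,t,u_1,\dots,u_k$ with edges $(s,v_i)$, $(v_i,s)$, $(s,w)$, $(w,t)$, $(w,u_i)$, $(t,u_i)$, $(u_i,v_i)$. Here $n=2k+3$, $m=5k+2$, the only bridges of $G(s)$ are $(s,w)$ and $(w,t)$ (so $b=2$), and the subtrees are $T(s)=\{s,v_1,\dots,v_k\}$, $T(w)=\{w,u_1,\dots,u_k\}$, $T(t)=\{t\}$. The graphs $G_s$ and $G_w$ then contain $3k$ induced edges, $3$ bridge copies, $k$ type-(a) shortcuts $(u_i,d(w))$, and $2k$ type-(b) shortcuts $(w,v_i)$ and $(t,u_i)$ (all with distinct endpoints, hence none discarded), for a total of $6k+3$ edges, which exceeds $m+2b=5k+6$ as soon as $k\ge 4$.

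You should also know that the paper's own proof does not supply the count you are missing: it bounds only the ordinary--ordinary edges (by $m-b$), the bridge copies (by $2b$), and the auxiliary--auxiliary edges (by $b$), and is silent about the ordinary-to-auxiliary and auxiliary-to-ordinary shortcuts of types (a) and (b) --- precisely the edges that break the constant in the example above. So the fix is not a cleverer de-duplication scheme but a weaker constant: charging every crossing edge once at its tail side and once at its head side, together with your other counts, gives at most $n+2b$ vertices and at most $2(m+b)$ edges in total, i.e.\ $O(m+b)=O(m)$, and this is all that the subsequent arguments (Lemma \ref{lemma:partition-subgraphs-construct} and the linear-time analysis in Lemma \ref{lemma:Fast2ECB-time}) actually use.
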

\begin{proof}
Every vertex appears as a ordinary vertex in one auxiliary graph. A marked vertex in $D(s)$ corresponds to a bridge in $G(s)$, so there are $b \le n-1$ marked vertices.
Since we have one auxiliary graph for each marked vertex, the total number of the auxiliary vertices $d(r)$ is $b$.
Each marked vertex $v$ can also appear in at most one other auxiliary graph as a child of a boundary vertex. So the total number of vertices is at most $n+2b$.
Next we bound the total number of edges.
The total number of edges between ordinary vertices in each $G_v$ is at most $m-b$.
Each bridge can appear in at most two auxiliary graphs. Finally, the number of edges connecting auxiliary vertices is at most $b$, since each such edge corresponds to a unique copy of a marked vertex. So we have at most $m+2b$ edges in total.
\end{proof}

Next we show that we can compute the $2$-edge-connected blocks in each auxiliary graph independently of each other.

\begin{lemma}
\label{lemma:partition-auxiliary-paths}
Let $v$ and $w$ be two vertices in a subtree $T(r)$. Any path from $v$ to $w$ in $G$ has a corresponding path from $v$ to $w$ in $G_r$, and vice versa.
\end{lemma}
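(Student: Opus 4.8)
The plan is to establish the two directions of the path correspondence separately, using the explicit construction of the auxiliary graph $G_r$ together with the structural lemmas (Lemmas \ref{lemma:parent-property}, \ref{lemma:parent-sibling-partition}, \ref{lemma:partition-paths}) that describe how edges and paths behave across the subtree decomposition of $D(s)$.

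First I would handle the direction from $G_r$ to $G$. Given a path $\pi$ from $v$ to $w$ in $G_r$, I would walk along $\pi$ and replace each edge that is not an edge of $G$ by the $G$-path it represents. The edges of $G_r$ are of three kinds: genuine edges of $G$ induced by ordinary vertices; bridge edges $(v',w')$ with $w'$ a marked child of a boundary vertex (or $(d(r),r)$), which are genuine edges of $G$ as well; and the three types of shortcut edges (a), (b), (c). For a shortcut edge of type (a), $(u,d(r))$, by construction it arose from an edge $(u,v')\in E$ with $v'$ not a descendant of $r$; by Lemma \ref{lemma:partition-paths} applied to $r$ and $v'$ there is a path from $v'$ to $r$ avoiding $T(r)\setminus r$, and in fact a path from $v'$ to $d(r)$ followed by nothing — more carefully, I would argue that $d(r)$ is reachable from $v'$ without re-entering $T(r)$, and stitch $(u,v')$ together with that path to realize $(u,d(r))$ by a $G$-path. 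For shortcuts of type (b) and (c) the replacement path uses the fact that $z$ is the child of $w$ on the tree path to $u$, so the tree path from $z$ down to $u$ lies outside $T(r)$ and can be traversed, then the original edge and (for type (c)) the type-(a)-style detour to $d(r)$ is appended. The key point to verify here is that the concatenation of these replacement segments is still a path from $v$ to $w$ in $G$ — it need not be simple, but the statement only asks for a path, so I can be cavalier about repeated vertices.

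Second, the direction from $G$ to $G_r$. Given a path $P$ from $v$ to $w$ in $G$ with $v,w\in T(r)$, I would decompose $P$ into maximal segments according to whether the current vertex is in $T(r)$ or not. Whenever $P$ stays inside $T(r)$, each edge is already an edge of $G_r$. Whenever $P$ leaves $T(r)$ at a vertex $u\in T(r)$ via an edge $(u,v')$ with $v'\notin T(r)$ and later re-enters $T(r)$ at an edge $(u',w')$ with $w'\in T(r)$, I need to show this whole excursion is simulated by a single edge (or short sequence of edges) of $G_r$. By Lemma \ref{lemma:parent-sibling-partition} applied to the re-entry edge $(u',w')$, either $(u',w')$ is the bridge $(d(w'),w')$ — in which case the excursion went through $d(r)$ if $w'=r$, or more generally $w'=r$ and the edge $(d(r),r)$ is in $G_r$ — or $u'$ is a proper descendant of $r$, which is impossible since $u'\notin T(r)$ while $u'$ being a proper descendant of $r$ would force... here I must be careful: a proper descendant of $r$ in $D(s)$ need not lie in $T(r)$, since $T(r)$ is $D(s)$ truncated at marked vertices. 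So the re-entry vertex $w'$ is reached either via the bridge into $r$ (giving $d(r)\to r$ in $G_r$), or $w'$ is an ordinary vertex of $T(r)$ reached from a $u'$ that is a proper descendant of some boundary vertex $w$ of $T(r)$, in which case the shortcut edge $(z,w')$ of type (b) is present, where $z$ is the child of $w$ that is an ancestor of $u'$. Symmetrically, the departure vertex $u$ either had an edge to a non-descendant of $r$ (type (a) shortcut $(u,d(r))$) or was itself reached inside $T(r)$. I would then argue that the excursion, which leaves $T(r)$ and comes back, is captured by following the appropriate departure shortcut to $d(r)$ (or staying at the boundary child $z$) and then the appropriate re-entry shortcut — the crucial claim being that once $P$ is outside $T(r)$ after leaving via a type-(a)-eligible configuration, it cannot re-enter $T(r)$ except through $r$ via its bridge, or through an ordinary vertex via a descendant of a boundary vertex, both of which are exactly the cases covered by the shortcut edges (a), (b), (c).

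The main obstacle I anticipate is the bookkeeping in this second direction: precisely matching each off-$T(r)$ excursion of $P$ to the right shortcut edge, and in particular confirming that the endpoints of the excursion are always of the special forms required for the shortcuts to exist. This hinges on the fact — which I would need to extract cleanly from Lemma \ref{lemma:parent-sibling-partition} and the parent property — that any edge entering $T(r)$ from outside either lands on $r$ (and is then the bridge $(d(r),r)$) or originates from a proper descendant of a boundary vertex of $T(r)$; and dually for edges leaving $T(r)$. Once that dichotomy is nailed down, the simulation is mechanical, but stating it without gaps is where the real work lies. I would also remark that, as in the companion lemmas, the replacement paths need not be simple, and I should therefore phrase everything in terms of (not necessarily simple) paths and appeal to the fact that any walk from $v$ to $w$ contains a simple path from $v$ to $w$ only if simplicity is needed later — here it is not.
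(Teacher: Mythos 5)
Your overall plan coincides with the paper's proof: both directions are handled by an excursion-by-excursion simulation based on Lemmas \ref{lemma:parent-property}, \ref{lemma:parent-sibling-partition} and \ref{lemma:partition-paths}. However, two of your steps would not go through as stated. First, in the direction from $G_r$ to $G$, you justify realizing a type (b)/(c) shortcut with tail $z$ by saying that ``the tree path from $z$ down to $u$ \ldots can be traversed.'' That path consists of dominator-tree edges, which are in general not edges of $G$, so it cannot be traversed in the graph. What is actually needed (and what the paper invokes) is the dominator fact that, since $z$ is an ancestor in $D(s)$ of the original tail $q$ of the edge, $G$ contains a path from $z$ to $q$ using only descendants of $z$ (take any $s$--$q$ path, which must contain $z$, and its suffix after the last occurrence of $z$). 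This is a fixable slip, but the justification you give is wrong.

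Second, and more substantively, in the direction from $G$ to $G_r$ your ``crucial claim'' --- every edge re-entering $T(r)$ is either the bridge $(d(r),r)$ or comes from a proper descendant of a boundary vertex --- is correct but too weak to stitch the excursions: the simulated walk in $G_r$ must be sitting at the right auxiliary vertex when the re-entry shortcut is used. If the excursion left $T(r)$ via a type (a) edge, the walk is at $d(r)$, whose only outgoing edge is $(d(r),r)$, so you must rule out a subsequent type (b) re-entry from a descendant of some boundary child $x'$; if it left via a bridge $(u,x)$, you must show that any type (b) re-entry comes from a descendant of that same $x$ (so the shortcut tail is $x$, where the walk currently is), and that once the path leaves the descendants of $x$ it lands at a non-descendant of $r$, forcing the type (c) shortcut $(x,d(r))$ followed by $(d(r),r)$. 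The paper establishes exactly these correlations by examining the first vertex $t$ after $x$ that is not a descendant of $x$, showing via Lemmas \ref{lemma:parent-property} and \ref{lemma:parent-sibling-partition} that either $t\in T(r)$ or $t$ is not a descendant of $r$, and then using Lemma \ref{lemma:partition-paths} to force re-entry through $(d(r),r)$ from any non-descendant of $r$. This region-tracking argument is the missing idea in your sketch; without it the ``mechanical'' simulation can get stuck, so it is a genuine gap rather than bookkeeping.
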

\begin{proof}
Consider a path $P$ from $v$ to $w$ in $G$. We show that it has a corresponding path $P_r$ from $v$ to $w$ in $G_r$.
If $P$ consists only of vertices in $T(r)$ then we have $P_r = P$. Otherwise,
let $(u,x)$ be the first edge on $P$ such that $u \in T(r)$ and $x \not\in T(r)$. Also let $(y,z)$ be the first edge on $P$ after $(u,x)$ such that $y \not\in T(r)$ and $z \in T(r)$.  By Lemma \ref{lemma:parent-sibling-partition}, edge $(u,x)$ is either a bridge or $u$ is a proper descendant of $r_x$ (the root of $T(x)$ in $D(s)$). Similarly, edge $(y,z)$ is either a bridge or $y$ is a proper descendant of $r$ in $D(s)$.
Suppose $(u,x)$ is a bridge. Then $u=d(x)$. Let $t$ be the first vertex on $P$ after $x$ that is not a descendant of $x$. If $t \in T(r)$ then $t=z$. In this case the part of $P$ from $x$ to $z$ corresponds to the edge $(x,z)$ in $P_r$. If $t \not\in T(r)$ then Lemma \ref{lemma:parent-sibling-partition} implies that $t$ is not a descendant of $r$ in $D(s)$.
By Lemma \ref{lemma:partition-paths}, we have that $(y,z)=(d(r),r)$, so the the part of $P$ from $x$ to $z$ corresponds to the edge $(x,d(r))$ in $P_r$. Now suppose that $u$ is a proper descendant of $r_x$ in $D(s)$. Then Lemma \ref{lemma:partition-paths} implies that $(y,z)=(d(r),r)$, so the the part of $P$ from $u$ to $y=d(r)$ corresponds to the edge $(x,d(r))$ in $P_r$.
We can repeat the same argument for every part of $P$ that is outside $T(r)$, which gives a valid path $P_r$ in $G_r$.

Now we prove that any path $P_r$ from $v$ to $w$ in $G_r$ has a corresponding path from $v$ to $w$ in $G$. If $P_r$ consists only of vertices in $T(r)$ then we have $P = P_r$. Otherwise,
let $(u,x)$ be the first edge on $P_r$ such that $u \in T(r)$ and $x \not\in T(r)$. Also let $(y,z)$ be the first edge on $P_r$ after $(u,x)$ such that $y \not\in T(r)$ and $z \in T(r)$. Then $x, y \in V_r^a$.
By construction, $(u,x)$ is either $(u,d(r))$ or $(d(x),x)$.
In the first case, $y=d(r)$ and $z=r$, since $(d(r),r)$ is the only edge leaving $d(r)$. Also $G$ contains an edge $(u,w)$ of type (a), where $w$ is not a descendant of $r$ in $D(s)$. Thus, by Lemma \ref{lemma:partition-paths} there is a path $Q$ in $G$ from $w$ to $r$ that does not contain any vertex in $T(r)\setminus r$ and contains the edge $(d(r),r)$.
This path corresponds to the part of $P_r$ that consists of the edges $(u,d(r))$ and $(d(r),r)$.
In the second case, $(u,x)=(d(x),x)$. Suppose $(y,z)=(d(r),r)$. Then $y=x$ since the edges leaving $x$ can only enter $T(r)$ or $d(r)$.
This implies that $G$ contains an edge $(q,t)$ of type (c), where $q$ is a descendant of $x$ and $t$ is not a descendant of $r$ in $D(s)$. By Lemma \ref{lemma:parent-property}, there is a path $Q$ in $G$ from $x$ to $q$ that contains only descendants of $x$ in $D(s)$.  Also, by Lemma \ref{lemma:partition-paths}, there is a path $Q'$ in $G$ from $t$ to $r$ that contains $(d(r),r)$. Path $Q\cdot (q,t)\cdot Q'$ is a path from $x$ to $r$ in $G$ that corresponds to the part of $P_r$ consisting of the edges $(x,d(r))$ and $(d(r),r)$.
Finally, suppose $z \in T(r)$ and $y \not= d(r)$. Then $y=x$, and $G$ contains an edge $(q,z)$ of type (b), where $q$ is a descendant of  $x$ in $D(s)$.
By Lemma \ref{lemma:parent-property}, there is a path $Q$ in $G$ from $x$ to $q$. Path $Q\cdot (q,z)$ is a path from $x$ to $z$ in $G$ that corresponds to the edge $(x,z)$ on $P_r$.
\end{proof}

\begin{corollary}
\label{corollary:auxialiry-graphs}
Each auxiliary graph $G_r$ is strongly connected.
\end{corollary}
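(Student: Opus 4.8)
The plan is to deduce strong connectivity of each auxiliary graph $G_r$ directly from Lemma \ref{lemma:partition-auxiliary-paths}, which already establishes a path correspondence between $G$ and $G_r$ for pairs of \emph{ordinary} vertices. First I would recall that $G$ is strongly connected (we assumed this at the start of Section \ref{section:2-edge-connected-blocks}), so for any two ordinary vertices $v, w \in T(r)$ there is a path from $v$ to $w$ in $G$; by the forward direction of Lemma \ref{lemma:partition-auxiliary-paths} this gives a path from $v$ to $w$ in $G_r$. Hence the ordinary vertices of $G_r$ all lie in a single strongly connected component of $G_r$.

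Next I would handle the auxiliary vertices, of which there are two kinds by construction: (i) a copy of a marked child $w$ of a boundary vertex $v \in T(r)$, together with the edge $(v,w) \in E_r$; and (ii) when $r$ is marked, a copy of $d(r)$ together with the edge $(d(r),r) \in E_r$. In case (ii), $d(r)$ has the incoming edge from $r$? No — rather, $d(r)$ has the \emph{outgoing} edge $(d(r),r)$ into the ordinary part, and it must also have an incoming edge: by construction, any edge $(u,v)$ of type (a) with $u$ ordinary and $v$ not a descendant of $r$ contributes the shortcut $(u,d(r))$; such an edge exists because $G$ is strongly connected and $r \neq s$, so some edge leaves $T(r)\setminus\{\text{stuff below }r\}$... more carefully, since $r$ is marked it has a proper dominator $d(r)$, and strong connectivity of $G$ forces at least one edge from the subtree region of $G_r$ out to a non-descendant of $r$, giving an incoming shortcut edge into $d(r)$ in $G_r$. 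So $d(r)$ is reachable from and reaches an ordinary vertex. In case (i), the auxiliary vertex $w$ has the incoming edge $(v,w)$ from the ordinary boundary vertex $v$; for the outgoing side, note that $w$ is itself the root of the subtree $T(w)$ on which a separate auxiliary graph is built, and within $G_r$ the vertex $w$ is a sink-like copy — but it must still reach back: edges of type (a), (b), (c) with source a proper descendant of the boundary vertex $v$ and appropriately chosen child $z = w$ provide outgoing edges from $w$ in $G_r$, and by strong connectivity of $G$ at least one such edge exists (since $T(w)$ is nonempty and $G$ strongly connected, some edge leaves the descendants of $w$ either back into $T(r)$, yielding a type-(b) shortcut $(w, \cdot)$, or out past $r$, yielding a type-(c) shortcut $(w, d(r))$).

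Finally I would assemble these observations: every auxiliary vertex both reaches an ordinary vertex and is reachable from an ordinary vertex in $G_r$, and the ordinary vertices form one strongly connected component; therefore all of $V_r$ lies in a single strongly connected component, i.e., $G_r$ is strongly connected. The main obstacle I anticipate is the bookkeeping in the previous paragraph: one must verify, using only strong connectivity of $G$ together with the parent property (Lemma \ref{lemma:parent-property}) and Lemmas \ref{lemma:parent-sibling-partition}--\ref{lemma:partition-paths}, that \emph{each} auxiliary vertex indeed has at least one incoming and one outgoing edge in $G_r$ of the claimed type — in particular that the type-(a), (b), (c) shortcut rules are exhaustive enough to guarantee this. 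This is essentially a case analysis on where an edge of $G$ leaving the relevant subtree can go, which is exactly the content already extracted in the proof of Lemma \ref{lemma:partition-auxiliary-paths}, so the corollary follows without new ideas.
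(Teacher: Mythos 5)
Your proposal is correct and takes essentially the same route as the paper, whose entire proof of this corollary is that it ``follows immediately from Lemma~\ref{lemma:partition-auxiliary-paths} and the fact that $G$ is strongly connected.'' Your extra bookkeeping for the auxiliary vertices is a legitimate filling-in of what the paper treats as immediate (the statement of Lemma~\ref{lemma:partition-auxiliary-paths} only covers pairs of ordinary vertices), and the one point you flag---that an edge leaving the descendants of a marked child $w$ (resp.\ of $r$) must have its head either in $T(r)$ or outside the descendants of $r$, so that a type (b) or (c) (resp.\ (a) or (c)) shortcut is indeed generated---is exactly the case analysis already carried out inside the proof of Lemma~\ref{lemma:partition-auxiliary-paths} via Lemmas~\ref{lemma:parent-property}--\ref{lemma:partition-paths}.
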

\begin{proof}
Follows immediately from Lemma \ref{lemma:partition-auxiliary-paths} and the fact that $G$ is strongly connected.
\end{proof}

\begin{lemma}
\label{lemma:partition-sugraphs}
Let $v$ and $w$ be any two distinct vertices of $G$. Then $v$ and $w$ are $2$-edge-connected in $G$ if and only if they are
both ordinary vertices in an auxiliary graph $G_r$ and they are
$2$-edge-connected in $G_r$.
\end{lemma}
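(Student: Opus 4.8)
The plan is to prove the two implications separately, in each case first using Lemma~\ref{lemma:2-connectivity-necessary-condition} to locate the unique candidate auxiliary graph and then transferring the connectivity information across the path correspondence of Lemma~\ref{lemma:partition-auxiliary-paths}. Throughout I will use the following reformulation, immediate from Menger's Theorem~\cite{menger} and the fact that $G$ and every $G_r$ are strongly connected (Corollary~\ref{corollary:auxialiry-graphs}): two distinct vertices $x,y$ of a strongly connected digraph $H$ are $2$-edge-connected in $H$ if and only if no edge of $H$ lies on every path from $x$ to $y$ and no edge lies on every path from $y$ to $x$; and any edge lying on every path from $x$ to $y$ is a strong bridge of $H$.

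For the forward implication, assume $v \leftrightarrow_{\mathrm{2e}} w$ in $G$. By Lemma~\ref{lemma:2-connectivity-necessary-condition}, $T(v)=T(w)$, so $v,w\in T(r)$ for a common $r$ and both are ordinary in $G_r$; it remains to show $v \leftrightarrow_{\mathrm{2e}} w$ in $G_r$. If not, some edge $f$ of $G_r$ lies on every path from $v$ to $w$ (say) in $G_r$. I would take two edge-disjoint paths from $v$ to $w$ in $G$ and map them, via Lemma~\ref{lemma:partition-auxiliary-paths}, to two paths from $v$ to $w$ in $G_r$, and argue the images are again edge-disjoint, so $f$ cannot lie on both --- a contradiction. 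Edge-disjointness is preserved because the portion of each image inside $T(r)$ is a subpath of the corresponding original, while each maximal excursion outside $T(r)$ is realized, in the construction of Lemma~\ref{lemma:partition-auxiliary-paths}, using at most one real edge of $G$ (a bridge of $G(s)$, or the edge by which the excursion leaves $T(r)$) and at most one shortcut edge; two excursions reusing the same real edge would contradict edge-disjointness of the originals, and by Lemma~\ref{lemma:partition-paths} at most one of the two originals can visit a vertex that is not a descendant of $r$, so the only shortcut edges that could be reused (those entering $d(r)$) occur on at most one image.

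For the reverse implication, assume $v,w$ are ordinary in $G_r$ (so $v,w\in T(r)$) and $v \leftrightarrow_{\mathrm{2e}} w$ in $G_r$, and suppose $v$ and $w$ are not $2$-edge-connected in $G$; then some strong bridge $e$ of $G$ lies on every path from $v$ to $w$ (say) in $G$. I would proceed by cases on $e$. If both endpoints of $e$ lie in $T(r)$, then $e\in E_r$; since the reconstruction of a shortcut edge in Lemma~\ref{lemma:partition-auxiliary-paths} uses only edges with an endpoint outside $T(r)$, a path from $v$ to $w$ in $G_r$ avoiding $e$ maps to one in $G$ avoiding $e$, a contradiction. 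If $e=(d(r),r)$ --- a bridge of $G(s)$, hence a strong bridge of $G$, and an edge of $E_r$ --- I use that $d(r)$ has out-degree $1$ in $G_r$: a path from $v$ to $w$ in $G_r$ avoiding $(d(r),r)$ also avoids every shortcut edge entering $d(r)$, hence its image in $G$ stays among descendants of $r$ and avoids $(d(r),r)$, again a contradiction. The remaining strong bridges of $G$ are bridges of $G(s)$ with an endpoint outside $T(r)$: such an edge is either the edge of $E_r$ into a marked child of a boundary vertex, handled as in the first case, or it lies outside the part of $G$ that $T(r)$ ``sees'', and then Lemma~\ref{lemma:partition-auxiliary-paths} together with Lemma~\ref{lemma:partition-paths} shows it cannot lie on every path from $v$ to $w$ in $G$, so it did not separate $v$ from $w$.

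The main obstacle I expect is the bookkeeping common to both transfer steps: showing that the correspondence of Lemma~\ref{lemma:partition-auxiliary-paths} is faithful with respect to a designated (strong) bridge --- an avoiding path on one side maps to an avoiding path on the other --- and, in the forward direction, that edge-disjointness is preserved. Both come down to a careful case analysis of the three kinds of shortcut edges and of the region of $G$ their reconstructions occupy (inside $T(r)$, among proper descendants of a marked child of a boundary vertex, or among non-descendants of $r$), relying on the parent property (Lemma~\ref{lemma:parent-property}), Lemma~\ref{lemma:parent-sibling-partition}, and Lemma~\ref{lemma:partition-paths} to certify that these regions are pairwise disjoint and that re-entry into $T(r)$ from below is forced through the single edge $(d(r),r)$.
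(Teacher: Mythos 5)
Your forward direction is fine and takes a genuinely different (though more delicate) route than the paper: the paper never maps edge-disjoint path pairs, it simply observes that a separating edge in $G_r$ pulls back to a separating edge in $G$ via Lemma~\ref{lemma:partition-auxiliary-paths}. Your route works, but note that your claim that the only shortcut edges the two images could share are those entering $d(r)$ silently uses the fact that \emph{any} excursion into the subtree of a marked child $z$ must traverse the real bridge $(d(z),z)$; that is what converts a shared type-(b) shortcut into a shared real edge and hence a contradiction, and it should be stated.

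The reverse direction, however, has a genuine gap: your case analysis of the separating edge is incomplete. If $v$ and $w$ are not $2$-edge-connected in $G$, the edge $e$ lying on every path from $v$ to $w$ is a strong bridge of $G$, but it need \emph{not} be a bridge of the flow graph $G(s)$ (strong bridges are bridges of $G(s)$ \emph{or} of $G^R(s)$; in fact $e$ is only guaranteed to be a bridge of $G(v)$). So your statement that ``the remaining strong bridges of $G$ are bridges of $G(s)$ with an endpoint outside $T(r)$'' omits exactly the hard cases: an edge $(x,y)$ with $x$ ordinary in $T(r)$ and $y$ not a descendant of $r$ that is not a bridge of $G(s)$ (such an edge is not in $E_r$; only its shortcut $(x,d(r))$ is), an edge whose tail lies strictly below a marked child of a boundary vertex, or an edge with both endpoints outside the descendants of $r$. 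Moreover, for the ``remote'' edges you assert that Lemmas~\ref{lemma:partition-auxiliary-paths} and~\ref{lemma:partition-paths} show $e$ ``cannot lie on every path from $v$ to $w$ in $G$''. As a standalone claim this is false -- a remote edge can perfectly well separate two vertices of the same subtree (this insufficiency of Lemma~\ref{lemma:2-connectivity-necessary-condition} is exactly why the lemma you are proving is needed) -- and it is not derivable in the direct way you suggest, because the reconstruction of Lemma~\ref{lemma:partition-auxiliary-paths} gives you no control over which remote edges the reconstructed excursions use, so you cannot build a $G$-path avoiding $e$ from a $G_r$-path. What is actually needed, and is the substance of the paper's proof, is a reduction to a canonical edge of $E_r$: if the tail of $e$ lies below a marked child $z$ of a boundary vertex, then by Lemma~\ref{lemma:partition-paths} every $v$-to-$w$ path in $G$ contains the bridge $(d(z),z)\in E_r$; if the tail of $e$ is not a descendant of $r$, or the tail is in $T(r)$ with head not a descendant of $r$ and $e$ is not a bridge (Lemma~\ref{lemma:parent-sibling-partition}), then every $v$-to-$w$ path contains $(d(r),r)\in E_r$; only then does Lemma~\ref{lemma:partition-auxiliary-paths} transfer that edge to all $v$-to-$w$ paths in $G_r$, contradicting $v\leftrightarrow_{\mathrm{2e}}w$ in $G_r$. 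Without this step your contradiction does not materialize.
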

\begin{proof}
From Lemma \ref{lemma:2-connectivity-necessary-condition}, we have that $v$ and $w$ must belong in the same subtree $T(r)$, so they are both ordinary vertices of $G_r$.
Clearly if all paths from $v$ to $w$ in $G_r$ contain a common edge (strong bridge), then so do all paths from $v$ to $w$ in $G$ by Lemma \ref{lemma:partition-auxiliary-paths}. Now we prove the converse. Suppose all paths from  $v$ to $w$ in $G$ contain a common edge $e=(x,y)$. If $x,y \in T(r)$ then also all paths from $v$ to $w$ in $G_r$ contain $e$. Suppose $x \in T(r)$ and $y \not\in T(r)$. By Lemma \ref{lemma:parent-sibling-partition} either $x=d(y)$ or $x$ is a descendant of $r_y$. In the former case, all paths from $v$ to $w$ in $G_r$ contain $e$. In the latter, Lemma \ref{lemma:partition-paths} implies that all paths from $v$ to $w$ in $G$ contain $(d(r),r)$. By Lemma \ref{lemma:partition-auxiliary-paths} this is also true for all paths from $v$ to $w$ in $G_r$.
Next consider that $x \not\in T(r)$ and is a descendant of $r$. Then $v$ is not an ancestor of $w$ in $D(s)$, since otherwise, by Lemma \ref{lemma:parent-property}, there would be a path from $v$ to $w$ that avoids $e$. Let $w \in T(r)$ be the boundary vertex that is an ancestor of $x$, and let $z$ be the child of $w$ that is an ancestor of $x$. By Lemma \ref{lemma:partition-paths}, all paths from $v$ to $x$ in $G$, and thus all paths from $v$ to $w$, contain the bridge $(w,z)$. By Lemma \ref{lemma:partition-auxiliary-paths} this is also true for all paths from $v$ to $w$ in $G_r$.
Finally, if $x \not\in T(r)$ and is not a descendant of $r$, Lemma \ref{lemma:partition-paths} implies that all paths from $x$ to $w$ in $G$ contain the bridge $(d(r),r)$.  Hence, all paths from $v$ to $w$ in $G$ contain the bridge $(d(r),r)$, and so do all paths from $v$ to $w$ in $G_r$ by Lemma \ref{lemma:partition-auxiliary-paths}.
\end{proof}

To construct the auxiliary graphs $G_r = (V_r, E_r)$ we need to specify how to compute the shortcut edges of each type (a), (b), and (c). Suppose $(u,v)$ is an edge of type (a). Then $v$ is not a descendant of $r$ in $D(s)$, which can be tested using an $O(1)$-time test of the ancestor-descendant relation. There are several simple $O(1)$-time tests of this relation~\cite{dfs:t}. The most convenient one for us is to number the vertices of $D(s)$ from $1$ to $n$ in preorder, and to compute the number of descendants of each vertex $v$, which we denote by $\mathit{size}(v)$. Then $v$ is a descendant of $r$ if and only if $\mathit{pre}(r) < \mathit{pre}(v) < \mathit{pre}(r) + \mathit{size}(r)$. Next suppose that $(u,v)$ is of type (b). Then $u$ is a proper descendant of a boundary vertex $w$ in $D(s)$. To compute the shortcut edge of $(u,v)$ we need to find the child $z$ of $w$ that is an ancestor of $u$ in $D(s)$. To that end, we create a list $B_r$ that contains the edges $(u,v)$ of type (b) such that $v \in T(r)$, and sort $B_r$ in increasing preorder of $u$. We create a second list $B'_r$ that contains the children in $D(s)$ of the boundary vertices in $T(r)$, and sort $B_r$ in increasing preorder. Then, the shortcut edge of $(u,v)$ is $(z,v)$, where $z$ is the last vertex in the sorted list $B'_r$ such that $\mathit{pre}(z) \le \mathit{pre}(u)$. Thus the shortcut edges of type (b) can be computed in linear time by bucket sorting and merging. Finally, consider the edges of type (c). For each such edge $(u,v)$ we need to add the edge $(z,d(r))$ in each $G_r$, where $u$ is a proper descendant of a boundary vertex $w \in T(r)$, $v$ is not a descendant of $r$ in $D(s)$, and $z$ is the child of $w$ that is an ancestor of $u$ in $D(s)$. We compute these edges for all auxiliary graphs $G_r$ as follows. First, we create a compressed tree $\widehat{D}(s)$ that contains only $s$ and the marked vertices. A marked vertex $v$ becomes child of its nearest marked ancestor $u$, or of $s$ if $u$ does not exist.
This easily done in $O(n)$ time during the preorder traversal of $D(s)$.
Next we process all edges $(u,v)$ such that $v$ is not a descendant of $r_u$ in $D(s)$. At each node $w \not= s$ in $\widehat{D}(s)$ we store a label $\ell(w)$ which is the minimum $\mathit{pre}(r_v)$ of an edge $(u,v)$ of type (c) such that $u \in T(w)$; we let $\ell(w)=\mathit{pre}(w)$ if no such edge exists. Using these labels we compute for each $w \not= s$ in $\widehat{D}(s)$ the values $\mathit{low}(w) = \min\{ \ell(v) \ | \ v \mbox{ is a descendant of } w \mbox{ in } \widehat{D}(s) \}$. These computations can be done in $O(m)$ time by processing the tree $\widehat{D}(s)$ in a bottom-up order. Now consider the auxiliary graph $G_r$. We process the children in $D(s)$ of the boundary vertices in $T(r)$. Note that these children are marked, so they have a $\mathit{low}$ value. For each such child $z$ we test if $G_r$ has a shortcut edge $(z,d(r))$: If $\mathit{low}(z) < \mathit{pre}(r)$ then we add the edge $(z,d(r))$.

\begin{lemma}
\label{lemma:partition-subgraphs-construct}
We can compute all auxiliary graphs $G_r$ in $O(m)$ time.
\end{lemma}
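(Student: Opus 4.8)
The plan is to walk through the construction described just above the lemma and verify that each ingredient runs in linear time, using Lemma~\ref{lemma:partition-subgraphs-size} to bound the totals. First I would compute the dominator tree $D(s)$ and the bridges of $G(s)$ in $O(m)$ time using any of the linear-time dominators algorithms~\cite{domin:ahlt,dominators:bgkrtw,domin:bkrw,dominators:Fraczak2013,dominators:poset,dom:gt04} together with Tarjan's bridge procedure~\cite{T74TR,dominators:bgkrtw}. Marking each vertex $w$ with $d(w)=u$ for a bridge $(u,w)$, cutting $D(s)$ into the subtrees $T(r)$, and computing preorder numbers $\mathit{pre}(\cdot)$ and subtree sizes $\mathit{size}(\cdot)$ are $O(n)$-time tree traversals; afterwards the ancestor-descendant test needed throughout costs $O(1)$.

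Next I would create the vertices and the ordinary edges. Every vertex of $G$ is the ordinary copy in exactly one $G_r$ (namely $r = r_v$), and each marked vertex $w$ gets an auxiliary copy in $G_{d(w)}$, plus, when $w$ is itself a root, an auxiliary copy of $d(w)$ in $G_w$; by Lemma~\ref{lemma:partition-subgraphs-size} this is $O(n)$ vertices in total. A single scan of $E$ then places each edge $(u,v)$ with $r_u = r_v$ into $G_{r_u}$, which costs $O(m)$ since these ordinary edge sets are disjoint.

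Then I would produce the three families of shortcut edges exactly as prescribed. Type (a): scan every $(u,v)\in E$, use the $O(1)$ ancestor test to decide whether $v$ is not a descendant of $r_u$, and if so emit the candidate edge $(u,d(r_u))$ in $G_{r_u}$, which takes $O(m)$. Type (b): for each $r$ form the lists $B_r$ and $B'_r$; a single global bucket sort by preorder produces all of them in sorted order in $O(m+n)$ time, and merging $B_r$ against $B'_r$ attaches to each such edge its shortcut $(z,v)$ in time $O(|B_r|+|B'_r|)$, hence $O(m)$ overall. Type (c): build the compressed tree $\widehat{D}(s)$ in one preorder pass ($O(n)$); set the labels $\ell(w)$ by scanning all edges and charging edge $(u,v)$ of type (c) to the $\widehat{D}(s)$-node $r_u$ ($O(m)$); propagate $\mathit{low}(w) = \min\{\ell(v) \mid v \mbox{ is a descendant of } w \mbox{ in } \widehat{D}(s)\}$ in one bottom-up sweep ($O(n)$); and for each $G_r$ inspect the children of the boundary vertices of $T(r)$, which are marked and hence carry a $\mathit{low}$ value, adding $(z,d(r))$ whenever $\mathit{low}(z) < \mathit{pre}(r)$. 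Since each marked vertex is the child of a boundary vertex in at most one subtree, this last step is $O(n)$. Finally, to suppress parallel shortcut edges inside each $G_r$ I would bucket sort the candidate edges of $G_r$; by Lemma~\ref{lemma:partition-subgraphs-size} the auxiliary graphs hold $O(m)$ edges in total, so the cleanup is $O(m)$. Summing all of these bounds yields the claimed $O(m)$ running time.

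The step I expect to be the main obstacle is type (c): naively it seems to require, for each subtree, information about edges leaving the subtree from its deep descendants, which could cost more than $O(m)$ if gathered per subtree. The resolution is to push the computation onto $\widehat{D}(s)$ and to observe that (i) the single aggregate value $\mathit{low}(z)$ per marked vertex $z$ suffices to decide whether the edge $(z,d(r))$ should be present, (ii) the comparison $\mathit{low}(z) < \mathit{pre}(r)$ faithfully tests whether some edge out of a descendant of $z$ leaves $T(r)$, and (iii) each marked vertex is examined in exactly one $G_r$, so these decisions amortize to $O(n)$. The symmetric family of auxiliary graphs built from $G^R(s)$ and $D^R(s)$ is constructed in the same way and within the same time bound.
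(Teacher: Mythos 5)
Your proof follows the paper's own construction essentially verbatim: linear-time dominators and bridges, preorder numbers with subtree sizes for $O(1)$ ancestor tests, the sorted lists $B_r$ and $B'_r$ for type (b) edges, and the compressed tree $\widehat{D}(s)$ with the $\ell$ and $\mathit{low}$ labels for type (c), all charged against the totals from Lemma~\ref{lemma:partition-subgraphs-size}. It is correct and matches the paper's argument, with the deduplication of parallel shortcut edges handled the same way (radix/bucket sorting within the $O(m)$ total edge bound).
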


Lemma \ref{lemma:partition-sugraphs} allows us to compute the $2$-edge-connected blocks of each auxiliary graph separately.
Algorithm \textsf{Rec2ECB} applies this idea recursively, until all ordinary vertices in each auxiliary graph are $2$-edge-connected.  Since the auxiliary vertices of $G_r$ do not belong to the same set of the $2$-edge-connected partition as the ordinary vertices of $G_r$, we only need to consider the bridges that separate ordinary vertices.

\begin{figure}[t!]
\begin{center}
\fbox{
\begin{minipage}[h]{\textwidth}
\begin{center}
\textbf{Algorithm \textsf{Rec2ECB}: Recursive computation of the $2$-edge-connected blocks for the ordinary vertices of a strongly connected digraph $G=(V,E)$}
\end{center}
\begin{description}\setlength{\leftmargin}{10pt}
\item[Step 1:] Choose an arbitrary ordinary vertex $s \in V^o$ as a start vertex. Compute the dominator trees $D(s)$ and $D^R(s)$ and the bridges of $G(s)$ and $G^R(s)$.
\item[Step 2:] Compute the number $b$ of bridges $(x,y)$ in $G(s)$ such that $y$ is an ancestor of an ordinary vertex in $D(s)$. Compute the number $b^R$ of bridges $(x,y)$ in $G^R(s)$ such that $y$ is an ancestor of an ordinary vertex in $D^R(s)$.
\item[Step 3:] If $b = b^R = 0$ then return $G$. ($[s]_\mathrm{2e}=V^o$.)
\item[Step 4:] If $b^R > b$ then swap $G$ and $G^R$. Partition $D(s)$ into subtrees $T(r)$ and compute the corresponding auxiliary graphs $G_r$. Compute recursively the $2$-edge-connected partition for each subgraph $G_r$ with at least two ordinary vertices.
\end{description}
\end{minipage}
}
\end{center}
\end{figure}

\begin{lemma}
\label{lemma:partition-algorithm}
Algorithm \textsf{Rec2ECB} runs in $O(mn)$ time.
\end{lemma}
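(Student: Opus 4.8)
The plan is to combine correctness, which is already available from Lemma~\ref{lemma:partition-sugraphs}, with a two‑factor running‑time estimate: the recursion has depth $O(n)$, and at every level of the recursion the total size of all the subproblems is $O(m)$; multiplying these gives $O(mn)$. This works because a single invocation of \textsf{Rec2ECB} on a subproblem, excluding its recursive calls, runs in time linear in the subproblem's size: Step~1 computes the two dominator trees and the bridges in linear time (Section~\ref{sec:dominators}), Step~2 computes $b$ and $b^R$ in linear time by a traversal of the dominator trees, and Step~4 builds all the auxiliary graphs in linear time by Lemma~\ref{lemma:partition-subgraphs-construct}.

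First I would bound the recursion depth. Whenever \textsf{Rec2ECB} does not return at Step~3 we have $b+b^R\ge 1$, so after the possible swap in Step~4 we have $b\ge 1$. Then at least one vertex is marked, so removing the corresponding dominator‑tree edges breaks $D(s)$ into at least two subtrees $T(r)$ and the ordinary vertices are split among them into at least two nonempty groups. Hence every recursive subproblem $G_r$ has strictly fewer ordinary vertices than the current graph, and since we recurse only on subproblems with at least two ordinary vertices, the recursion depth is at most $n-1$.

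The heart of the proof is bounding the total size of the subproblems at each level. I would view the recursion as a rooted tree $\mathcal{T}$ whose nodes are the subproblems and whose root is $G$; a node $G_r$ that is split has exactly $b_r+1$ children, where $b_r\ge 1$ is the number of (relevant) bridges used to split it. The key claim is that $\sum_{G_r\in\mathcal{T}} b_r = O(n)$: every leaf of $\mathcal{T}$ contains at least one ordinary vertex and the ordinary vertices of a node are distributed among its children, so $\mathcal{T}$ has $O(n)$ leaves; since every internal node of $\mathcal{T}$ has at least two children, $\sum_{G_r} b_r = (\#\text{leaves of }\mathcal{T}) - 1 = O(n)$. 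Now, by Lemma~\ref{lemma:partition-subgraphs-size} the auxiliary graphs produced from one subproblem $G_r$ have in total at most $2b_r$ more vertices and $2b_r$ more edges than $G_r$, so descending one level of $\mathcal{T}$ increases the total size of the subproblems by at most $2\sum_{G_r\text{ at that level}}b_r$. Consequently the total size of the subproblems at any level is at most $m + 2\sum_{G_r\in\mathcal{T}}b_r = m + O(n) = O(m)$ (recall that $m\ge n$). Summing $O(m)$ over the $O(n)$ levels yields the claimed $O(mn)$ bound.

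The step I expect to be the main obstacle is the last one: controlling the blow‑up introduced by the auxiliary vertices and shortcut edges. Since the auxiliary graphs are not subgraphs of $G$, one must show that the cumulative overhead added over the entire recursion stays $O(n)$ — ultimately charged to the at most $2n-2$ strong bridges of $G$ (Lemma~\ref{lemma:strong-bridges-number}) — and, in particular, that an auxiliary vertex of a subproblem that is promoted to an ordinary vertex in a child subproblem does not inflate the leaf count of $\mathcal{T}$ beyond $O(n)$. Making this charging argument precise, so that both the $O(n)$‑leaves claim and the per‑level $O(m)$ size bound hold simultaneously, is where the real work lies.
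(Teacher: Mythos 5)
Your depth bound and the overall plan of multiplying depth by per-level work are fine, but the central claim that the total size of the subproblems at every level is $O(m)$ is not established, and this is a genuine gap rather than a technicality. That claim rests on $\sum_r b_r = O(n)$, which you justify by a leaf count, and the leaf count breaks down in two ways. First, in the actual recursion tree an internal node does \emph{not} have $b_r+1$ children: the algorithm recurses only on auxiliary graphs with at least two ordinary vertices, so an internal node may have a single child even when $b_r$ is large, and the identity $\sum_r b_r = \#\text{leaves}-1$ fails. Second, if you instead take the full tree in which every subtree $T(r)$ becomes a child, then the statement ``every leaf contains at least one ordinary vertex'' is false: deep in the recursion a subtree can consist entirely of auxiliary vertices. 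Indeed, each copied marked vertex has in-degree exactly one in its auxiliary graph, so the edge entering it is a bridge of \emph{every} subsequent flow graph; as long as its boundary parent stays in the recursed subtree, the copy is re-copied at the next level and contributes one more bridge and one more all-auxiliary leaf. Thus the same separating structure can be charged at each of the $\Theta(n)$ levels, which is precisely the repeated-reuse phenomenon illustrated in Figure~\ref{fig:badcase}, and nothing in your argument prevents $\sum_r b_r$ (and hence the per-level total size) from growing well beyond $O(n)$ and $O(m)$. Your closing paragraph acknowledges that this charging argument is missing, so the proof as it stands is incomplete.

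The bound can be obtained without any per-level amortization, which is essentially the paper's route: the number of recursive calls is $O(n)$, since every call that reaches Step~4 strictly refines the partition of $V$ induced by the ordinary-vertex sets (at most $n-1$ such calls) and the terminal calls have pairwise disjoint ordinary sets of size at least two; and each single call costs $O(m)$, because every individual auxiliary graph is no larger than its parent --- the vertices of $G_r$ are the disjoint subsets $T(r)$, the copied marked children of boundary vertices, and possibly $d(r)$, all distinct vertices of the parent, and each parent edge gives rise to at most one edge of $G_r$ --- so every subproblem has at most $n$ vertices and $m$ edges, and dominator trees, bridges and the auxiliary graphs are built in time linear in that size (Lemma~\ref{lemma:partition-subgraphs-construct}). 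If you prefer to keep your per-level formulation, you would have to prove the individual-size bound above anyway and then you no longer need $\sum_r b_r=O(n)$ at all, so the simpler call-counting argument is the natural repair.
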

\begin{proof}
Each recursive call refines the current partition of $V$, thus we have at most $n-1$ recursive calls. By \cite{dominators:bgkrtw,T74TR} and Lemma \ref{lemma:partition-subgraphs-construct}, the total work per recursive call is $O(m)$.
\end{proof}

\begin{figure}[t!]
\begin{center}
\includegraphics[width=\textwidth]{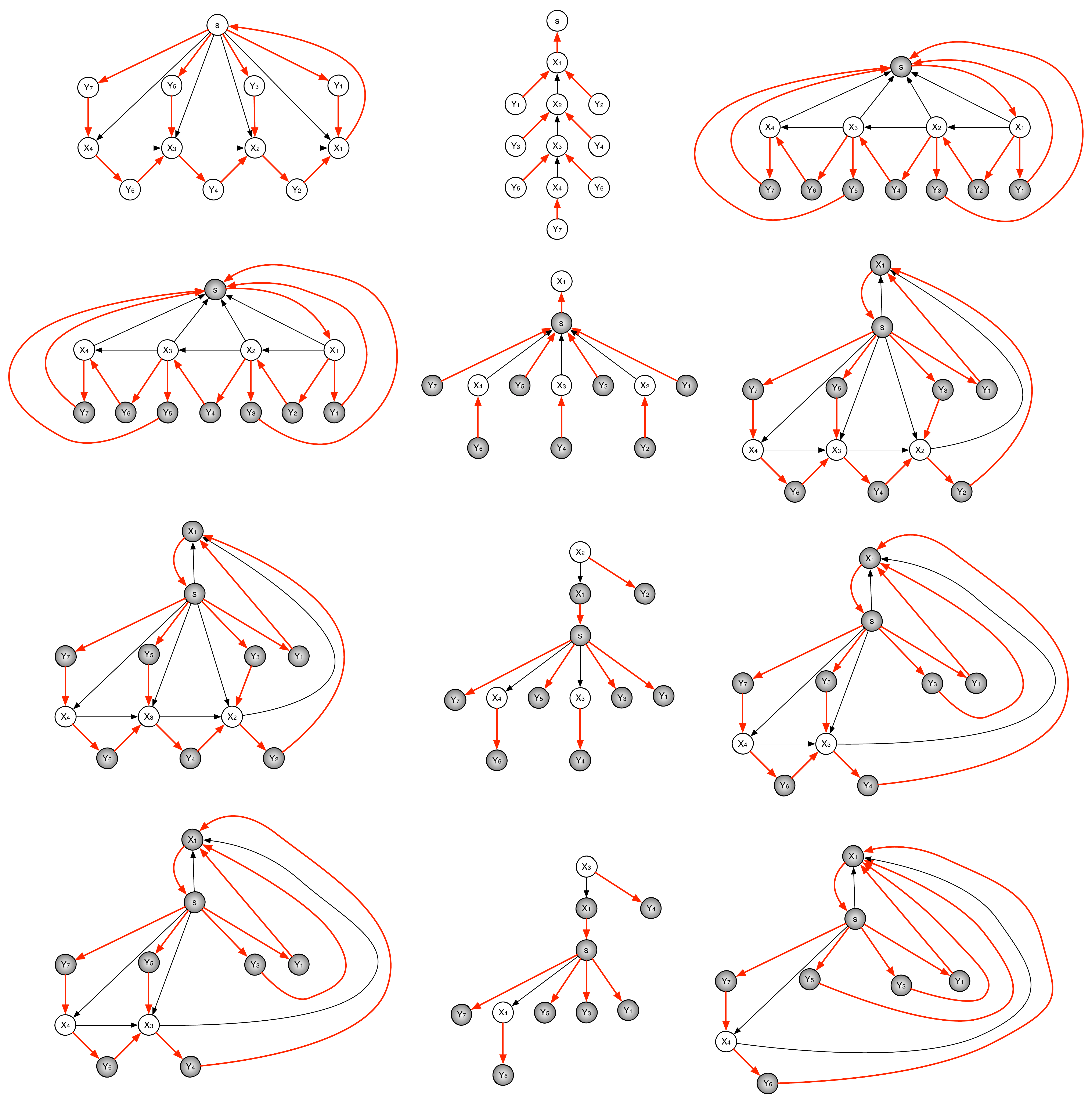}
\caption{An input digraph with $n=\Theta(k)$ vertices that causes $k$ recursive calls of Algorithm \textsf{Rec2ECB}. Left column: The input digraph in each recursive call; middle column: The dominator tree used to compute the next partition; right column: The auxiliary graph that contains the majority of ordinary vertices, that will be the input digraph in the next recursive call.
Vertices $X_1$, $X_2$, \ldots, $X_k$ are not $2$-edge-connected but Algorithm \textsf{Rec2ECB} requires $k$ recursive calls to separate them into different blocks. (In this figure $k=4$.)
}
\label{fig:badcase}
\end{center}
\end{figure}

We note that the bound stated in Lemma \ref{lemma:partition-algorithm} is tight. The same strong bridge can be used repeatedly to separate different pairs of vertices in successive recursive calls (see Figure \ref{fig:badcase}).

\subsection{Linear-time algorithm}
\label{sec:linear-time-2ECB}

Although Algorithms \textsf{Simple2ECB} and \textsf{Rec2ECB} run in $O(mn)$ time, we show that a careful combination of them gives a linear-time algorithm. The critical observation, proved in Lemma~\ref{lemma:Fast2ECB-correctness} below,  is that if a strong bridge separates different pairs of vertices in successive recursive calls (which causes the worst-case behavior of  Algorithm \textsf{Rec2ECB}), then it will appear as the strong bridge entering the root of a subtree in our decomposition of a dominator tree. Algorithm \textsf{Fast2ECB} applies this observation together with all the building blocks we developed in the previous paragraphs, and achieves the computation of the $2$-edge-connected blocks in linear time.

\begin{figure}[t!]
\begin{center}
\fbox{
\begin{minipage}[h]{\textwidth}
\begin{center}
\textbf{Algorithm \textsf{Fast2ECB}: Linear-time computation of the $2$-edge-connected blocks of a strongly connected digraph $G=(V,E)$}
\end{center}
\begin{description}\setlength{\leftmargin}{10pt}
\item[Step 1:] Choose an arbitrary vertex $s \in V$ as a start vertex. Compute the dominator tree $D(s)$ and the bridges of $G(s)$.
\item[Step 2:] Partition $D(s)$ into subtrees $T(r)$ and compute the corresponding auxiliary graphs $G_r$.
\item[Step 3:] For each auxiliary graph $H = G_r$ do:
    \begin{description}\setlength{\leftmargin}{10pt}
    \item[Step 3.1:] Compute the dominator tree $D_H^R(r)$ and the bridges of $H^R(r)$. Let $d_H^R(q)$ be the parent of $q \not= r$ in $D_H^R(r)$.
    \item[Step 3.2:] Partition $D_H^R(r)$ into subtrees $T_H^R(q)$ and compute the corresponding auxiliary graphs $H^R_q$.
    \item[Step 3.3:] For each auxiliary graph $H^R_q$ do:
        \begin{description}\setlength{\leftmargin}{10pt}
        \item[Step 3.3.1:] Compute the strongly connected components $S_1, S_2, \ldots, S_k$ of $H^R_q\setminus (d_H^R(q),q)$.
        \item[Step 3.3.2:] Partition the ordinary vertices of $H_q$ into blocks according to each $S_j$, $j=1,\ldots,k$; For each ordinary vertex $v$, $[v]_\mathrm{2e}$ contains the ordinary vertices in the strongly connected component of $v$.
        \end{description}
    \end{description}
\end{description}
\end{minipage}
}
\end{center}
\end{figure}

\begin{lemma}
\label{lemma:Fast2ECB-correctness}
Algorithm \textsf{Fast2ECB} is correct.
\end{lemma}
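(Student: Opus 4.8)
\textbf{Proof proposal for Lemma~\ref{lemma:Fast2ECB-correctness}.}

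The plan is to show that Algorithm \textsf{Fast2ECB} correctly computes $[v]_\mathrm{2e}$ for every ordinary vertex $v$, by combining the decomposition results already at our disposal. First I would observe that by Lemma~\ref{lemma:partition-sugraphs}, two vertices $v$ and $w$ are $2$-edge-connected in $G$ if and only if they are both ordinary in the same auxiliary graph $G_r$ and are $2$-edge-connected there; hence Step~2 is a faithful reduction, and it suffices to argue that the inner loop (Steps 3.1--3.3.2) correctly computes the $2$-edge-connected blocks of the ordinary vertices of each $H = G_r$. Inside $H$, observe that $H$ is strongly connected (Corollary~\ref{corollary:auxialiry-graphs}) and that $r$ is a legitimate start vertex for $H$; moreover, because $v,w$ ordinary in $H$ are $2$-edge-connected in $H$ only if they lie in the same subtree $T_H^R(q)$ of the decomposition of $D_H^R(r)$ (this is the ``reverse-graph half'' of Lemma~\ref{lemma:2-connectivity-necessary-condition} applied to $H$), and only if, within $H^R_q$, they survive the removal of the bridge $(d_H^R(q),q)$ together with all other bridges of $H^R_q$. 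So the correctness claim reduces to showing that, after one application of the forward decomposition (Step~2) followed by one application of the reverse decomposition of each piece (Step~3.2), the \emph{only} strong bridge that can still separate two ordinary vertices lying in the same $T_H^R(q)$ is the single edge $(d_H^R(q),q)$ --- so that Step~3.3.1's single strong-connectivity computation in $H^R_q \setminus (d_H^R(q),q)$ captures exactly the remaining refinement.

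The key step --- and the one flagged in the paragraph preceding the algorithm --- is this last claim: any strong bridge that repeatedly separates different pairs of vertices across successive levels of \textsf{Rec2ECB} must show up as the bridge entering the root of a subtree in the \emph{next} decomposition. Concretely, I would argue as follows. Fix $H=G_r$ and a strong bridge $e$ of $H$ that separates two ordinary vertices $v,w$ of $H$ that lie in the same $T^R_H(q)$. By Lemma~\ref{lemma:2-connected-from} (and its reverse-graph analogue, Corollary~\ref{corollary:2-connected-v}) applied to the flow graph $H^R(r)$, the fact that $v,w$ are \emph{not} separated by any bridge of $H^R(r)$ that is an ancestor edge in $D^R_H(r)$ means $e$ is not such a bridge; but $e$ is a strong bridge of $H$, hence a bridge of $H(r)$ \emph{or} of $H^R(r)$ for some orientation. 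If $e$ is a bridge of $H^R(r)$, the parent property (Lemma~\ref{lemma:parent-property}) forces the endpoints of $e$ into a parent-child relation in $D^R_H(r)$; since $e$ does not separate $v$ from $w$ across subtrees (they share $T^R_H(q)$) yet does separate them, $e$ must be the edge $(d^R_H(q),q)$ entering the root $q$ of their common reverse-subtree --- precisely the edge removed in Step~3.3.1. The only remaining possibility is that $e$ is a bridge of the forward flow graph $H(r)$; here I would invoke the structural Lemmas~\ref{lemma:parent-sibling-partition}--\ref{lemma:partition-paths} to show that such an $e$, restricted to the block-separating role, is already accounted for either by the partition into the $T(r)$'s in Step~2 (so $v,w$ would not be ordinary in the same $G_r$, contradiction) or again by an ancestor edge in $D(s)$; in either case no additional refinement is needed beyond what Step~3.3.1 performs inside $H^R_q$. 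Putting the forward and reverse directions together via Menger's Theorem (exactly as in the proof of Corollary~\ref{corollary:2-connected-v}), $v \leftrightarrow_\mathrm{2e} w$ in $H$ iff $v,w$ lie in the same $T^R_H(q)$, are ordinary in the same $H^R_q$, and lie in the same strongly connected component of $H^R_q \setminus (d^R_H(q),q)$ --- which is what Steps 3.3.1--3.3.2 compute.

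The main obstacle I expect is the case analysis establishing that two levels of decomposition --- one forward, one reverse --- suffice to reduce every block-separating strong bridge to the single root-entering edge handled in Step~3.3.1. This requires carefully tracking how a strong bridge of $G$ manifests as a bridge (forward or reverse) of an auxiliary flow graph, and checking that the shortcut edges introduced in the construction of auxiliary graphs (types (a), (b), (c)) do not create a path that spuriously reconnects two vertices that $e$ should separate, nor destroy a separation --- but this is exactly the content of Lemma~\ref{lemma:partition-auxiliary-paths}, which I would lean on heavily. A secondary point to handle cleanly is the bookkeeping that an ordinary vertex of $H^R_q$ corresponds to an ordinary vertex of $H=G_r$ which in turn corresponds to an ordinary vertex of $G$, so that the blocks reported in Step~3.3.2 are blocks of the original digraph; this is routine given Lemma~\ref{lemma:partition-sugraphs} but must be stated. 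Once these are in place, correctness follows by assembling the equivalences above.
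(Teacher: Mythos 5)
Your high-level plan coincides with the paper's: reduce to the auxiliary graphs via Lemma~\ref{lemma:partition-sugraphs}, and then argue that inside each $H=G_r$, after the reverse decomposition of $D_H^R(r)$, the single edge $(d_H^R(q),q)$ suffices to detect every remaining separation. But your proof of that central claim has a genuine gap, in both of your cases. In the reverse case you conclude that a bridge of $H^R(r)$ which separates two ordinary vertices $v,w$ lying in the same $T_H^R(q)$ ``must be'' $(d_H^R(q),q)$ because it ``does not separate them across subtrees yet does separate them.'' This is a non sequitur: membership in the same subtree only constrains dominance from $r$ in $H^R(r)$, i.e.\ which bridges lie on paths from $r$ to $v$ or $w$; it says nothing about which edges lie on \emph{all} paths from $w$ to $v$ in $H^R$. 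A priori a bridge $(d_H^R(x),x)$ with $x$ a marked proper ancestor of $q$ could lie on every such path, and Figure~\ref{fig:separate} is exactly the warning that shared-subtree membership does not by itself exclude separation. Note also that the statement you set out to prove (``the \emph{only} strong bridge that can still separate such a pair is $(d_H^R(q),q)$'') is both stronger than needed and false in general, since several strong bridges may separate the same pair (cf.\ Figure~\ref{fig:badcase}); what must be shown, and what the paper shows, is the implication that \emph{if} some strong bridge lies on all $u$-to-$v$ paths in $H$, \emph{then} the specific edge $(q,p)$ with $p=d_H^R(q)$ lies on all of them, so that deleting that one edge and computing strongly connected components performs the correct refinement.

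The forward case is where the essential idea is missing. You dismiss a separating bridge of the flow graph $H(r)$ as ``already accounted for'' by the Step~2 partition into subtrees of $D(s)$, concluding $v,w$ could not both be ordinary in $G_r$. That is not so: bridges of $H(r)$ need not be bridges of $G(s)$, and the same strong bridge can separate fresh pairs at the next level of decomposition --- this is precisely the worst-case behavior of \textsf{Rec2ECB} illustrated in Figure~\ref{fig:badcase} that \textsf{Fast2ECB} is designed to overcome, so this case cannot be discharged by appealing to the earlier partition. The paper's argument is different and hinges on routing through the root $r$ of $H$: it splits on whether some bridge $(d_H^R(x),x)$ with $x$ an ancestor of $u$ in $D_H^R(r)$ lies on all paths from $v$ to $u$ in $H^R$. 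If yes, a detour through $r$ forces that bridge (and hence the separating edge) to be $(p,q)$, i.e.\ $(q,p)$ lies on all $u$-to-$v$ paths in $H$; if no, then for any candidate strong bridge $e\neq(q,p)$ one exhibits a path from $u$ to $r$ in $H$ avoiding $e$ and a path from $r$ to $v$ avoiding $e$, and their concatenation contradicts the assumption that $e$ separates $u$ from $v$. Without this two-case, through-the-root argument your key step does not go through. (A smaller slip: you write that the vertices must ``survive the removal of the bridge $(d_H^R(q),q)$ together with all other bridges of $H^R_q$''; Step~3.3.1 removes only the one edge, and justifying that this single removal suffices is exactly the point at issue.)
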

\begin{proof}
Let $u$ and $v$ be any vertices. If $u$ and $v$ are $2$-edge-connected in $G$, then by Lemma \ref{lemma:partition-sugraphs} they are $2$-edge-connected in both auxiliary graphs of $G$ and $G_r$ that contain them as ordinary vertices. This
 implies that the algorithm will correctly include them in the same block. So suppose that $u$ and $v$ are not $2$-edge-connected.
Then, without loss of generality, we can assume that all  paths from $u$ to $v$ contain a common strong bridge. We argue that the blocks of $u$ and $v$ will be separated in some step of the algorithm.
If $u$ and $v$ are located in different subtrees of $D(s)$ then the claim is true. If they are in the same subtree then they appear in an auxiliary graph $H=G_r$ as ordinary vertices. By Lemma \ref{lemma:partition-sugraphs}, $H$ contains a strong bridge that is contained in all paths from $u$ to $v$. Let $H^R$ be the reverse graph of $H$. Let $D_H^R(r)$ be the dominator tree of $H^R(r)$. If $u$ and $v$ are located in different subtrees of $D_H^R$ then the claim is true. Suppose then that they are located in a subtree with root $q$. By Corollary \ref{corollary:2-connected-v}, $q \not=r$. Let $p=d_H^R(q)$ be the parent of $q$ in $D_H^R(r)$. Then $(q,p)$ is a strong bridge of $H$.  We claim that $H\setminus (q,p)$ does not contain any path from $u$ to $v$. 
To prove the claim, we consider two cases. First suppose that all paths from $v$ to $u$ in $H^R$ contain a bridge $(d_H^R(x),x)$ of $D_H^R(r)$ such that $x$ is ancestor of $u$. Then $(q,p)$ must appear in all paths from $u$ to $v$ in $H$. If not, then $(p,q) \not= (d_H^R(x),x)$, and there is a path $\pi$ in $H^R$ from $x$ to $u$ that avoids $(p,q)$. Since $x$ is an ancestor of $p$, there is a path $\pi'$ in $H^R$ from $r$ to $x$ that also avoids $(p,q)$. So $\pi' \cdot \pi$ gives a path from $r$ to $u$ in $H^R$ that avoids $(p,q)$, a contradiction.
Now suppose that there is no bridge $(d_H^R(x),x)$ of $D_H^R(r)$ with $x$ an ancestor of $u$ that is contained in all paths from $v$ to $u$ in $H^R$. Let $e$ be a strong bridge that separates $u$ from $v$ in $H$. Then $e \not= (q,p)$, so there is a path $\pi$ in $H$ from $u$ to $r$ that avoids $e$. But $H$ contains a path $\pi'$ from $r$ to $v$ that avoids $e$. Then $\pi\cdot\pi'$ is a path from $u$ to $v$ in $H$ that does not contain $e$, a contradiction.
\end{proof}

Finally, we show that the algorithm indeed runs in linear time.

\begin{lemma}
\label{lemma:Fast2ECB-time}
Algorithm \textsf{Fast2ECB} runs in $O(m)$ time.
\end{lemma}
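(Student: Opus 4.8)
The plan is to charge the work done in each step of Algorithm \textsf{Fast2ECB} against the combinatorial quantities that are controlled by the results already developed, namely the sizes of the auxiliary graphs. The key accounting fact is Lemma~\ref{lemma:partition-subgraphs-size}: if a flow graph has $b$ bridges, then its auxiliary graphs have $n+2b$ vertices and $m+2b$ edges in total. Since $b \le n-1$ and the input graph is strongly connected ($m \ge n$), this is $O(m)$ total. The subtlety is that the algorithm builds auxiliary graphs \emph{twice} — once from $D(s)$ in Step~2, and then a second level from each $D_H^R(r)$ in Steps~3.1--3.2 — so I must check that summing the second level over all first-level auxiliary graphs $H=G_r$ still telescopes to $O(m)$.

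First I would handle Step~1: computing $D(s)$ and the bridges of $G(s)$ takes $O(m)$ time by the linear-time dominator algorithms of~\cite{dominators:bgkrtw} and Tarjan's bridge algorithm~\cite{T74TR,dominators:bgkrtw}. Step~2 takes $O(m)$ time by Lemma~\ref{lemma:partition-subgraphs-construct}. Now let $n_r$ and $m_r$ denote the number of vertices and edges of the first-level auxiliary graph $G_r$; by Lemma~\ref{lemma:partition-subgraphs-size}, $\sum_r n_r \le n + 2b = O(n)$ and $\sum_r m_r = O(m)$. For each such $H=G_r$, Step~3.1 runs a linear-time dominator and bridge computation on $H^R(r)$, costing $O(m_r)$, and Step~3.2 builds the second-level auxiliary graphs $H_q^R$ in $O(m_r)$ time, again by Lemma~\ref{lemma:partition-subgraphs-construct} applied to $H$. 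Moreover, applying Lemma~\ref{lemma:partition-subgraphs-size} to $H$, the total size of the graphs $H_q^R$ is $O(m_r)$. In Step~3.3, for each $H_q^R$ we compute strongly connected components of $H_q^R \setminus (d_H^R(q),q)$ in linear time~\cite{dfs:t} and bucket-sort the ordinary vertices into blocks, which costs time linear in the size of $H_q^R$. Summing over $q$ gives $O(m_r)$ for a fixed $H$, and summing over all $r$ gives $\sum_r O(m_r) = O(m)$, since $\sum_r m_r = O(m)$.

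The one place that needs care — and which I expect to be the main obstacle in writing the proof cleanly — is verifying that the ``size of $H_q^R$'' bound genuinely telescopes through two levels. Lemma~\ref{lemma:partition-subgraphs-size} is stated for a single flow graph; to apply it to each $H=G_r$ I must note that $H$ is itself strongly connected (Corollary~\ref{corollary:auxialiry-graphs}), so it has at most $n_r - 1$ bridges, and hence its auxiliary graphs $\{H_q^R\}$ have $O(n_r)$ vertices and $O(m_r)$ edges. Then $\sum_r O(m_r) = O\bigl(\sum_r m_r\bigr) = O(m+2b) = O(m)$. A secondary point is that Steps~3.1--3.2 are applied to $H^R$ and its \emph{own} decomposition, but the auxiliary-graph machinery is symmetric under edge reversal, so the size bound applies verbatim to $H^R$ with the same parameters $n_r, m_r$. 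Finally, the preorder numbering, $\mathit{size}$ values, and the compressed trees $\widehat{D}$ used inside Lemma~\ref{lemma:partition-subgraphs-construct} are all recomputed locally within each (sub)graph and cost no more than linear in that graph's size, so they do not affect the bound. Collecting the costs of Steps~1--3 yields a total of $O(m)$.
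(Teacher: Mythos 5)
Your proof is correct and follows essentially the same route as the paper: charge Steps 1--2 to linear-time dominator/bridge computation and Lemma~\ref{lemma:partition-subgraphs-construct}, then use Lemma~\ref{lemma:partition-subgraphs-size} twice (once for the $G_r$ and once, per auxiliary graph, for the $H_q^R$) so that the total size telescopes to $O(m)$. Your per-graph bookkeeping with $n_r,m_r$ and the bound of at most $n_r-1$ bridges per strongly connected auxiliary graph is an equally valid substitute for the paper's appeal to Lemma~\ref{lemma:strong-bridges-number}.
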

\begin{proof}
We analyze the total time spent on each step that Algorithm \textsf{Fast2ECB} executes.
Step 1 takes $O(m)$ time by \cite{dominators:bgkrtw}, and Step 2 takes $O(m)$ time by Lemma \ref{lemma:partition-subgraphs-construct}. From Lemma \ref{lemma:partition-subgraphs-size} we have that the total number of vertices and the total number of edges in all auxiliary graphs $H$ of $G$ are $O(n)$ and $O(m)$ respectively. Therefore, the total number of strong bridges in these auxiliary graphs is $O(n)$ by Lemma \ref{lemma:strong-bridges-number}. Then, by Lemma \ref{lemma:partition-subgraphs-size}, the total size (number of vertices and edges) of all auxiliary graphs $H_q^R$ for all $H$, computed in Step 3.2, is still $O(m)$ and they are also computed in $O(m)$ total time by Lemma \ref{lemma:partition-subgraphs-construct}. So Steps 3.1 and 3.3 take $O(m)$ time in total as well.
\end{proof} 
\section{Sparse certificate for the $2$-edge-connected blocks}
\label{section:sparse-certificate}

We now show how to compute in linear time  a sparse certificate for the $2$-edge-connected blocks, i.e., a subgraph $C(G)$ of the input graph $G$ that has $O(n)$ edges
and maintains the same $2$-edge-connected blocks as the input graph.  Such a sparse certificate allows allows us to speed up computations, such as finding the actual edge-disjoint paths that connect a pair of vertices. See, e.g., \cite{sparse-k-connected:ni}.
As in Section \ref{section:2-edge-connected-blocks} we can assume without loss of generality that $G$ is strongly connected, in which case subgraph $C(G)$ will also be strongly connected.
The certificate uses the concept of \emph{independent spanning trees} \cite{domcert}. In this context, a spanning tree $T$ of a flow graph $G(s)$ is a tree with root $s$ that contains a path from $s$ to $v$ for all vertices $v$.  Two spanning trees $B$ and $R$ rooted at $s$ are \emph{independent} if for all $v$, the paths from $s$ to $v$ in $B$ and $R$ share only the dominators of $v$. Every flow graph $G(s)$ has two such spanning trees, computable in linear time \cite{domcert}. Moreover, the computed spanning trees are\emph{ maximally edge-disjoint}, meaning that the only edges they have in common are the bridges of $G(s)$.

The sparse certificate can be constructed during the computation of the $2$-edge-connected blocks, by extending Algorithm \textsf{Fast2ECB}. We now sketch the main modifications needed. During the execution of Algorithm \textsf{Fast2ECB}, we maintain a list (multiset) $L$ of the edges to be added in $C(G)$. The same edge may be inserted into $L$ multiple times, but the total number of insertions will be $O(n)$. Then we can use radix sort to remove duplicate edges in $O(n)$ time.
We initialize $L$ to be the empty.
During Step 1 of Algorithm \textsf{Fast2ECB} we compute two independent spanning trees, $B(G(s))$ and $R(G(s))$ of $G(s)$ and insert their edges into $L$.
We also add the edges of a spanning tree of the reverse flow graph $G^R(s)$.
Next, in Step 3.1 we compute two independent spanning trees $B(H^R(r))$ and $R(H^R(r))$ for each auxiliary graph $H^R(r)$.
For each edge $(u,v)$ of these spanning trees, we insert a corresponding edge into $L$ as follows. If both $u$ and $v$ are ordinary vertices in $H^R(r)$, we insert $(u,v)$ into $L$ since it is an original edge of $G$.
Otherwise, $u$ or $v$ is an auxiliary vertex and we insert into $L$ a corresponding original edge of $G$. Such an original edge can be easily found during the construction of the auxiliary graphs.
Finally,  in Step 3.3, we compute two spanning trees for every connected component $S_i$ of each auxiliary graph $H^R_q\setminus (p,q)$ as follows. Let $H_{S_i}$ be the subgraph of $H_q$ that is induced by the vertices in $S_i$. We choose an arbitrary vertex $v \in S_i$ and compute a spanning tree of $H_{S_i}(v)$ and a spanning tree of $H^R_{S_i}(v)$. We insert in $L$ the original edges that correspond to the edges of these spanning trees.

\begin{lemma}
The sparse certificate $C(G)$ has the same $2$-edge-connected blocks as the input digraph $G$.
\end{lemma}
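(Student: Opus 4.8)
The plan is to show that $C(G)$ preserves $2$-edge-connected blocks by a two-directional argument, exploiting the layered structure of Algorithm \textsf{Fast2ECB}. Since $C(G)\subseteq G$, any two vertices that are $2$-edge-connected in $C(G)$ are automatically $2$-edge-connected in $G$: adding edges can only merge blocks. Thus the whole content is the converse, namely that if $u\leftrightarrow_{\mathrm{2e}}w$ in $G$, then $u\leftrightarrow_{\mathrm{2e}}w$ in $C(G)$. First I would recall that $C(G)$ is strongly connected, because it contains a spanning tree of $G(s)$ and a spanning tree of $G^R(s)$ (so there is an $s$--$v$ path and a $v$--$s$ path in $C(G)$ for every $v$), hence it makes sense to talk about $2$-edge connectivity in $C(G)$.

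Next I would peel the construction exactly as the algorithm does. By Corollary~\ref{corollary:2-connected-v} and the independence of $B(G(s))$ and $R(G(s))$, for any vertex $w$ with $w\in[s]_{\overrightarrow{\mathrm{2e}}}$ the two paths from $s$ to $w$ in $B(G(s))$ and $R(G(s))$ are edge-disjoint outside the bridges of $G(s)$; in particular, if $w$ and $u$ lie in the same subtree $T(r)$ of the bridge-decomposition of $D(s)$, the portions of these two trees below $r$ give two edge-disjoint paths from $r$ to each of them that stay inside $T(r)$. The key reduction step is to translate $2$-edge-connectivity between $u$ and $w$ in $G$ into $2$-edge-connectivity between them in the auxiliary graph $G_r$ (Lemma~\ref{lemma:partition-sugraphs}), then — after reversing and re-rooting — into $2$-edge-connectivity in $H_q$, and finally, by the argument in Lemma~\ref{lemma:Fast2ECB-correctness}, into the fact that $u$ and $w$ lie in the same strongly connected component $S_i$ of $H^R_q\setminus(p,q)$. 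Since in Step~3.3 we added a spanning tree of $H_{S_i}(v)$ and of $H^R_{S_i}(v)$ for an arbitrary $v\in S_i$, the induced subgraph of $C(G)$ inside $S_i$ is strongly connected; together with the independent spanning trees inserted in Steps~1 and~3.1 for the graphs $G(s)$ and $H^R(r)$, this should give two edge-disjoint paths from $u$ to $w$ and two from $w$ to $u$ in $C(G)$.

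Concretely I would argue that $C(G)$ restricted to any block $B=[u]_{\mathrm{2e}}$ is $2$-edge-connected by building the two edge-disjoint paths explicitly: inside the bottom-level component $S_i$ use the two spanning trees (one of $H_{S_i}(v)$, one of $H^R_{S_i}(v)$) routed through the common root $v$, which already yields two edge-disjoint round trips among all vertices of $S_i$; then verify that the ``lifting'' of these auxiliary-graph paths back to $G$ (replacing shortcut edges by the original $G$-paths whose edges were also inserted into $L$, as described in the paragraph before the lemma) keeps them edge-disjoint. Edge-disjointness of the lifted paths follows because the shortcut edges of distinct types correspond to edge-disjoint pieces of $G$ (a type-(a)/(c) shortcut routes through the unique bridge $(d(r),r)$ and the dominator structure above $r$, a type-(b) shortcut routes through descendants of the relevant boundary child), and because the independent spanning trees at each level only share bridges, which are precisely the edges the decomposition removed — so the two paths never both need the same bridge.

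The main obstacle will be the edge-disjointness bookkeeping across the three nested levels of auxiliary-graph recursion: an edge of $G$ can be a genuine edge in one auxiliary graph and a fragment of a shortcut in another, and a bridge at one level may be an ordinary edge inside $S_i$ at the next. I would handle this by a careful invariant — at each level the two paths we maintain are edge-disjoint and, moreover, any bridge of the current flow graph is used by at most one of them — and then check that this invariant is preserved when passing to $G_r$ (Lemma~\ref{lemma:partition-auxiliary-paths}), to $H^R_q$, and to $S_i$; the base case is the pair of spanning trees inside $S_i$, which are trivially edge-disjoint since they are edge-disjoint trees on disjoint original-edge sets except for shared bridges, and inside $S_i$ there are no bridges separating $u$ from $w$ by construction. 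Putting these pieces together, together with Lemma~\ref{lemma:partition-sugraphs} and Lemma~\ref{lemma:Fast2ECB-correctness}, shows that $u$ and $w$ are in the same block of $C(G)$ exactly when they are in the same block of $G$, which is the claim.
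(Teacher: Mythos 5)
Your overall framing (the easy direction via $C(G)\subseteq G$, plus a converse) is fine, but the route you take for the converse has genuine gaps, and it is not the paper's argument. The paper does not construct edge-disjoint paths in $C(G)$ at all: it runs Algorithm \textsf{Fast2ECB} on $C(G)$ with the same start vertex and argues that every intermediate structure coincides with the run on $G$ --- the two independent spanning trees of $G(s)$ certify that $C(G)$ has the \emph{same} dominator tree and bridges as $G(s)$ (this is the certification property from \cite{domcert}), hence the same subtrees $T(r)$; likewise for each $H^R(r)$ in Step 3.1; and the spanning trees added in Step 3.3 force the same strongly connected components of $H^R_q\setminus(p,q)$. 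Correctness of \textsf{Fast2ECB} (Lemma \ref{lemma:Fast2ECB-correctness}) applied to $C(G)$ then yields identical blocks. This simulation argument deliberately sidesteps exactly the path bookkeeping you attempt.

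The concrete problems with your construction are the following. First, your central claim that ``$C(G)$ restricted to any block $B=[u]_{\mathrm{2e}}$ is $2$-edge-connected'' is false in general --- already in $G$ itself a $2$-edge-connected block need not induce a $2$-edge-connected (or even strongly connected) subgraph; the witnessing edge-disjoint paths may have to leave the block. This is precisely the distinction between blocks and components emphasized in the introduction (Figure \ref{fig:example1}), so an argument confined to the induced subgraph on a block cannot work. Second, your lifting step assumes that for every shortcut edge used by the inserted spanning trees, the edges of an entire corresponding path of $G$ were placed in $L$; but the construction inserts only \emph{one} corresponding original edge of $G$ per spanning-tree edge incident to an auxiliary vertex, so the lift you describe is not available inside $C(G)$, and for the same reason the spanning trees of $H_{S_i}(v)$ and $H^R_{S_i}(v)$ (which live in an auxiliary graph containing shortcut edges) do not make the induced subgraph of $C(G)$ on $S_i$ strongly connected. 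Third, the edge-disjointness invariant across the three nested levels --- the point you yourself flag as the main obstacle --- is only asserted; note also that ``independent'' spanning trees guarantee that the two $s$-to-$v$ tree paths share only dominators of $v$, which does not by itself give the edge-disjoint $u$-to-$w$ and $w$-to-$u$ paths you need between arbitrary vertices of a block. If you want a proof along your lines you would essentially have to reprove Lemmas \ref{lemma:partition-auxiliary-paths} and \ref{lemma:partition-sugraphs} for $C(G)$; the cleaner fix is to adopt the paper's strategy and show that \textsf{Fast2ECB} run on $C(G)$ reproduces the same dominator trees, bridges, auxiliary graphs, and bottom-level strongly connected components, and then invoke Lemma \ref{lemma:Fast2ECB-correctness}.
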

\begin{proof}
It suffices to show that the execution of Algorithm \textsf{Fast2ECB} on $C(G)$ and produces the same $2$-edge-connected blocks as the execution of Algorithm \textsf{Fast2ECB} on $G$. The correctness of Algorithm \textsf{Fast2ECB} implies that it produces the same result regardless of the choice of start vertex $s$. So we assume that both executions choose the same start vertex $s$. We will refer to the execution of Algorithm \textsf{Fast2ECB} with input $G$ (resp. $C(G)$) as \textsf{Fast2ECB}$(G)$ (resp. \textsf{Fast2ECB}$(C(G))$).

First we note that $C(G)$ is strongly connected since it contains a spanning tree of $G(s)$ and a spanning tree of $G^R(s)$. Moreover, the fact that $C(G)$ contains two independent spanning trees of $G$ implies that $G$ and $C(G)$ have the same dominator tree and bridges with respect to the start vertex $s$ that are computed in Step 1. Hence, the subtrees $T(r)$ computed of Step 2 of Algorithm \textsf{Fast2ECB} are the same in both executions \textsf{Fast2ECB}$(G)$ and \textsf{Fast2ECB}$(C(G))$.
The same argument as in Step 1 implies that in Step 3.1, both executions \textsf{Fast2ECB}$(G)$ and \textsf{Fast2ECB}$(C(G))$ compute the same partitions $T^R(r)$ of each auxiliary graph $H^R(r)$.
Finally, by construction, the strongly connected components of each auxiliary graph $H^R_q\setminus (p,q)$ are the same in both executions of \textsf{Fast2ECB}$(G)$ and \textsf{Fast2ECB}$(C(G))$.

We conclude that \textsf{Fast2ECB}$(G)$ and \textsf{Fast2ECB}$(C(G))$ compute the same $2$-edge-connected blocks as claimed.
\end{proof} 
\section{Concluding remarks and open problems}
\label{section:concluding}

We studied $2$-edge connectivity in directed graphs and, in particular, we presented a linear-time algorithm for the $2$-\emph{edge-connected} relation among vertices. Our approach is based on a careful combination of two $O(mn)$ algorithms. Given the $2$-edge-connected blocks of a digraph $G$, it is straightforward to check in constant time if any two vertices are $2$-edge-connected.
We have implemented the algorithms described in this paper and performed preliminary experiments on large graphs (with millions of vertices and edges); in those experiments two of our algorithms, \textsf{Rec2ECB} and \textsf{Fast2ECB}, performed very well. Our techniques can be extended to the computation of the $2$-vertex-connected blocks of a directed graph. We can show that, although the $2$-vertex-connected blocks do not define a partition of the vertices, they can be represented by a tree structure with $O(n)$ nodes similar to a representation used in \cite{onlineBiconnected:WT92} for the biconnected components of an undirected graph. Using this representation we can test in constant time if any two vertices are $2$-vertex-connected.
We leave as an open question if the $2$-edge-connected or the $2$-vertex-connected components of a digraph can be computed in linear time. The best current bound for both problems is $O(mn)$.


\end{document}